\newtheorem{theorem}{Theorem}
\newtheorem{lemma}{Lemma}
\newtheorem{definition}{Definition}
\newtheorem{remark}{Remark}
\newtheorem{property}{Property}
\DeclareMathOperator{\Real}{Re}
 \newcommand{\Dat}{\mathcal{D}^{\alpha}_{t}}
 \newcommand{\ABC}{\mathcal{ABC}}
\begin{document}

\begin{tikzpicture}[remember picture,overlay]
	\node[anchor=north east,inner sep=20pt] at (current page.north east)
	{\includegraphics[scale=0.2]{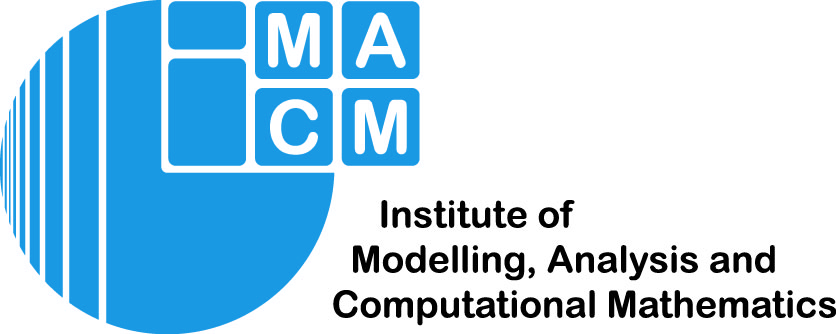}};
\end{tikzpicture}

\begin{frontmatter}


\title{Mathematical Modeling and Hyers-Ulam Stability for a Nonlinear Epidemiological Model with $\Phi_p$ Operator and Mittag-Leffler Kernel}

\author[AMNEA]{Achraf Zinihi}
\ead{a.zinihi@edu.umi.ac.ma}

\author[AMNEA]{Moulay Rchid Sidi Ammi}
\ead{rachidsidiammi@yahoo.fr}

\author[BUW]{Matthias Ehrhardt\corref{Corr}}
\cortext[Corr]{Corresponding author}
\ead{ehrhardt@uni-wuppertal.de}

\address[AMNEA]{Department of Mathematics, MAIS Laboratory, AMNEA Group, Faculty of Sciences and Technics,\\
Moulay Ismail University of Meknes, Errachidia 52000, Morocco}

\address[BUW]{University of Wuppertal, Chair of Applied and Computational Mathematics,\\
Gaußstrasse 20, 42119 Wuppertal, Germany}

\begin{abstract}
This paper investigates a novel nonlinear singular fractional SI model with the $\Phi_p$ operator and the Mittag-Leffler kernel. 
The initial investigation includes the existence, uniqueness, boundedness, and non-negativity of the solution. 
We then establish Hyers-Ulam stability for the proposed model in Banach space. 
Optimal control analysis is performed to minimize the spread of infection and maximize the population of susceptible individuals.
Finally, the theoretical results are supported by numerical simulations.
\end{abstract}

\begin{keyword}
epidemiological model \sep fractional differential equations \sep $p$-Laplacian operator \sep numerical approximations \sep optimal control.\\
\textit{2020 Mathematics Subject Classification.} 92C60, 34A08, 47H20, 33F05, 49J20.
\end{keyword}

\journal{}
\date{February 19, 2024}

\end{frontmatter}

\section{Introduction}\label{S1}
Fractional-order models have attracted considerable interest from researchers in a wide variety of disciplines. Over the past two decades, these models have found applications in a wide variety of scientific and engineering fields, including modern physics, signal theory, control theory, hydrodynamics, viscoelastic theory, fluid dynamics, set theory, computer networks, biology, etc. 
Relevant literature on these topics can be found in the works \cite{Ghoshal2024, Hilfer2000, Huo2015, Maamar2024, Podlubny1998, Samko1993, Wang2023}.

Recently, several researchers have studied \textit{fractional differential equations} (FDEs) with singularities using various mathematical methods. For example, Bai and Qiu \cite{Bai2009} established the \textit{existence and uniqueness} (EU) of the solution to a nonlinear singular \textit{boundary value problem} (BVP) of FDEs using the Krasnoselskii and Leray-Schauder fixed point theorems. 
They also demonstrated applications to underscore their results. Agarwal, O’Regan and Staněk \cite{Agarwal2010} studied the EU for a singular fractional BVP using the Riemann-Liouville fractional derivative. 
Bai and Fang \cite{Bai2004} studied a singular nonlinear coupled system of FDEs, using Leray-Schauder and Krasnoselskii fixed point techniques for the $\mathrm{EU}$ of the solution. 
Vong \cite{Vong2013} studied FDEs with singularity and non-local boundary conditions using the Schauder fixed-point approach and upper-lower solution techniques. 
Pu et al.\ \cite{Pu2018} studied positive solutions of a multipoint BVP with singularity and applied their results to a specific example. 
Khan, Chen and Sun \cite{Khan2018} studied nonlinear FDEs with singularity and $p$-Laplacian to establish the EU of the solution and performed stability analysis.

Mathematical models have long been indispensable tools for understanding and predicting the dynamics of infectious diseases. 
By quantifying the complex interactions between pathogens and populations, these models enable researchers and policymakers to gain insight into the spread of disease and evaluate potential control strategies. 
Some of the pioneering work in epidemic modeling can be attributed to Kermack and McKendrick \cite{Kermack1927}, who introduced the SIR model in 1927. 
This model divided the population into three compartments: susceptible ($S$), infectious ($I$), and recovered ($R$). 
Using differential equations, the model captured the transitions between these compartments and laid the foundation for subsequent advances in epidemic modeling. 

Over time, mathematical models of epidemics have evolved and expanded to incorporate additional complexities. 
Researchers recognized the importance of accounting for factors such as age structure, spatial heterogeneity, and varying transmission rates. 
This led to the development of more sophisticated compartmental models, such as the SEIR model, which introduced an exposed compartment \cite{Anderson1991}. 
For a recent review on epidemiological models we refer the interested reader to \cite{NSFD_Review} and the references therein.

In addition, spatial epidemic models and network-based models emerged to capture the influence of geographic location and social connectedness on disease spread \cite{Keeling2007}. 
In recent years, advanced mathematical techniques have further enhanced the capabilities of epidemic models. 
Network theory has provided insights into the role of social connections in disease transmission, allowing the exploration of targeted intervention strategies. 
Nonlinear dynamics and chaos theory have shed light on complex epidemic behavior, including the emergence of periodic outbreaks and bifurcations.

The main goal of this paper is to minimize the number of infected individuals for the \textit{fractional SI model}, which describes the evolution of two compartments: susceptible individuals ($S$) and infected individuals ($I$), considering the \textit{Atangana-Baleanu fractional derivative in the Caputo sense} ($\ABC$ fractional derivative, for short) and the $\Phi_p$ operator.
The inclusion of these operators in our epidemic model offers several compelling motivations. 
First, the use of the $\ABC$ fractional derivative allows for the inclusion of memory effects in the model and long-range interactions in disease transmission.
Traditional derivative operators assume instantaneous changes, which may not accurately capture the dynamics of infectious diseases.
By introducing fractional calculus, we can account for the persistence of past infection rates, allowing for a more realistic representation of disease transmission.

Second, the $\Phi_p$ operator introduces nonlinearity and non-local interactions, reflecting the impact of infection on susceptible and infected populations within the SI model. 
This extension allows us to analyze the impact of localized outbreaks and potential hotspots within the epidemic dynamics.

This SI model has the potential to provide insight into the spread of infectious diseases under fractional order dynamics.
The $\ABC$ fractional derivatives and the $\Phi_p$ operator contribute to a more nuanced understanding of the interplay between susceptible and infected individuals, providing a valuable tool for designing effective disease control and mitigation strategies. 
The fractional nature of the system allows for the incorporation of memory effects, enhancing the model's realism and applicability to real-world scenarios. 
Further exploration of this coupled operator in this context provides an opportunity to delve into the intricate fractional calculus aspects of the model, contributing to a broader understanding of fractional-order epidemiological systems.

Nevertheless, we encountered challenges in establishing the positivity and boundedness of the solutions associated with the proposed model. 
These challenges arise from the nonlinearity introduced by the operator ${ }^{\ABC} \Dat [\Phi_p(\cdot)]$.
In addition, difficulties were encountered in determining the adjoint system, which is crucial for establishing the necessary optimality conditions.
At the same time, difficulties were encountered in numerical approximations aimed at determining the forward-backward $\ABC$ derivative.

Due to the distinct characteristics of these two operators, numerous experts and scholars have made significant contributions to the scientific literature in various fields.
For example, but not exhaustively, the work \cite{SidiAmmi2023} delves into the analysis of a reaction-diffusion SIR biological model formulated as a parabolic system of PDEs incorporating the $p$-Laplacian operator. 
This study emphasizes the critical role of vaccine distribution in the induction of immunity. 
The primary goal of this work is to develop an optimal control strategy that is carefully designed to limit both the spread of infection and the associated vaccination costs.

Jena, Chakraverty, and Baleanu \cite{Jena2021} used the homotopy perturbation Elzaki transformation method to derive solutions to an epidemic model of childhood diseases involving the $\ABC$ fractional derivatives. 
Their primary goal was to protect children from diseases preventable by vaccination.
They proposed the \textit{homotopy perturbation method} (HPM) transform to address the problem at hand because of the occasional challenges the Elzaki transform faces in handling nonlinear terms within the FDEs.
The study by \cite{Zhao2023} focused primarily on nonlinear coupling $(p_1, p_2)$ Laplacian systems with the nonsingular $\ABC$ fractional derivative. 
He established sufficient criteria for the existence and uniqueness of solutions based on the parameter values $p_1$ and $p_2$.
This interest and wide application of these two operators in various scientific disciplines underscores their broad and versatile influence on the research landscape.
Further insights and relevant literature on these topics can be explored in works such as \cite{Chai2012, Chen2012, Han2023, Hosseininia2022, Niimi2012}.

This paper is organized as follows. 
In Section~\ref{S2}, several key and crucial definitions are given. 
In Section~\ref{S3} we present the fractional optimal control SI model with $\Phi_p$ operator and Mittag-Leffler kernel. 
In Section~\ref{S4} we prove the existence, uniqueness, non-negativity, and boundedness of the solution. 
Section~\ref{S5} deals with the Hyers-Ulam stability of the proposed problem. 
Furthermore, Section~\ref{S6} is devoted to the determination of the necessary optimality conditions. 
Before concluding the present study, interesting numerical approximations are explained in Section~\ref{S7}. 
Finally, we conclude our study in Section~\ref{S8}.

\section{Notations and preliminaries}\label{S2}
In this section, we will recall some definitions and properties that we will need in the next sections on the Mittag-Leffler function, which plays an important role in the solution of fractional order differential and integral equations. To do this, we will introduce
the $\ABC$ fractional derivative,
the $\mathcal{AB}$ fractional integral, and the $\Phi_p$ operator. 
For the rest of this paper, we will choose
$\alpha\in (0,1)$ and $2 \le p<\infty$.

First, the two-parameter and one-parameter Mittag-Leffler functions are defined as follows.
\begin{definition}[Mittag-Leffler function, \cite{Atangana2016}]\label{D1}
Let $\omega \in \mathbb{C}$ such that $\Real(\omega)>0$, the Mittag-Leffler function $E_{\alpha, \beta}$ is defined by
\begin{equation}\label{MGMittag-Lefler}
     E_{\alpha,\beta}(\omega) = \sum_{k=0}^{\infty} \frac{\omega^{k}}{\Gamma(\alpha k+\beta)}, \quad\beta>0,
\end{equation}
where
\begin{equation*}
    \Gamma(\omega)=\int_0^{+\infty} e^{-t} t^{\omega-1}\,dt.
\end{equation*}
If $\beta =1$, then
\begin{equation*}
   E_{\alpha}(\omega)=E_{\alpha, 1}(\omega)=\sum_{k=0}^{\infty} \frac{\omega^{k}}{\Gamma(\alpha k +1)}.
\end{equation*}
\end{definition}
We note that \eqref{MGMittag-Lefler} was introduced by Magnus G\"osta Mittag-Leffler \cite{mittag1903generalisation} in 1903 with $\beta= 1$.
Next, we provide basic facts about the Gamma function and a link to the exponential function.
\begin{remark}\label{R1}\ \\[-.5cm]
\begin{itemize}
\item[i.] $\Gamma(\omega+1) = \omega\,\Gamma(\omega)$.

\item[ii.] By definition we have $\Gamma(1)=1$, and for any $n\in \mathbb{N}$ we find $\Gamma(n+1)=n!$.

\item[iii.] $E_{\alpha,\beta}$ is a generalization of the exponential function, and we have $E_{1}(\omega) = E_{1, 1}(\omega) = e^\omega$.
\end{itemize}
\end{remark}

We now define the $\ABC$ fractional derivative and the $\mathcal{AB}$ fractional integral. 
Let $T>0$, $f\in H^1(0,T)$ and $t\in (0,T)$. 
We set $B(\alpha) = (1-\alpha) + \frac{\alpha}{\Gamma(\alpha)}$ and $\gamma = \frac{\alpha}{1-\alpha}$.

\begin{definition}[$\ABC$ fractional derivative, \cite{Atangana2016}]\label{D2}\ \\[-.4cm]
\begin{itemize}
\item[a.] The $\ABC$ derivative of order $\alpha$ of $f$ with base point $0$ is defined at point~$t$ by
\begin{equation}\label{E2.1}
{ }^{\ABC} \Dat f(t)=\frac{B(\alpha)}{1-\alpha} \int_0^t f^{\prime}(y) E_\alpha [-\gamma (t-y)^\alpha]\,dy.
\end{equation}

\item[b.] The backward $\ABC$ derivative with base point $T$, is given by
\begin{equation}\label{E2.2}
{ }^{\ABC}_T \Dat f(t) 
= -\frac{B(\alpha)}{1-\alpha} \int_t^T f^{\prime}(y) E_\alpha
[-\gamma (y-t)^\alpha] \,dy.
\end{equation}
\end{itemize}
\end{definition}

Note that if we let $\alpha\to1$ in \eqref{E2.1}, then we get the usual derivative $\partial_t$.

\begin{definition}[$\mathcal{AB}$ fractional integral, \cite{Atangana2016}]\label{D3}
The $\mathcal{AB}$ fractional integral operator with base point $0$, is written as
\begin{equation}\label{E2.3}
{ }^{\mathcal{AB}} I^\alpha f(t) 
= \frac{1-\alpha}{B(\alpha)} f(t)+\frac{\alpha}{B(\alpha)\Gamma(\alpha)}\int_0^t f(s) (t-s)^{\alpha-1} \,ds.
\end{equation}
\end{definition}

The following Lemma~\ref{L1} describes the basic relation of
the $\ABC$ fractional derivative and the $\mathcal{AB}$ fractional integral. It will help us in the later Sections~\ref{S4} and \ref{S6} to prove the positivity and boundedness of the solution and to derive the adjoint system associated with the state model.
\begin{lemma}[\protect{\cite[Proposition 3.4.]{Abdeljawad2017}}]\label{L1}
Under the previous assumptions, we have 
\begin{equation}\label{E2.4}
{ }^{\mathcal{AB}} I^\alpha \bigl( { }^{\ABC} \Dat f(t)\bigr) = f(t) - f(0).
\end{equation}
\end{lemma}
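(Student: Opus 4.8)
The plan is to recast both operators as Volterra convolutions, reduce the identity \eqref{E2.4} to a single scalar identity for the Mittag--Leffler function, and then verify that identity from its power series. Write $(g\ast h)(t)=\int_0^t g(t-s)\,h(s)\,ds$ and set $e_\alpha(t):=E_\alpha\bigl(-\gamma t^\alpha\bigr)$. By Definition~\ref{D2}, ${}^{\ABC}\Dat f(t)=\frac{B(\alpha)}{1-\alpha}\,(f'\ast e_\alpha)(t)$, while by Definition~\ref{D3}, ${}^{\mathcal{AB}}I^\alpha g=\frac{1-\alpha}{B(\alpha)}\,g+\frac{\alpha}{B(\alpha)}\,I^\alpha g$, where $I^\alpha g(t)=\frac{1}{\Gamma(\alpha)}\int_0^t (t-s)^{\alpha-1}g(s)\,ds=\frac{1}{\Gamma(\alpha)}\,(t^{\alpha-1}\ast g)(t)$ is the Riemann--Liouville fractional integral. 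Since $f\in H^1(0,T)$ we have $f'\in L^1(0,T)$, so all convolutions below are well defined on $(0,T)$ and associative. First I would substitute the convolution form of ${}^{\ABC}\Dat f$ into ${}^{\mathcal{AB}}I^\alpha$, use $\frac{\alpha}{1-\alpha}=\gamma$, and move $I^\alpha$ through the convolution via $I^\alpha(f'\ast e_\alpha)=f'\ast(I^\alpha e_\alpha)$; this turns the left-hand side of \eqref{E2.4} into $f'\ast\bigl(e_\alpha+\gamma\,I^\alpha e_\alpha\bigr)$.

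It then suffices to prove the scalar identity
\begin{equation*}
e_\alpha(t)+\gamma\,\bigl(I^\alpha e_\alpha\bigr)(t)=1,\qquad t\in[0,T].
\end{equation*}
Granting this, the left-hand side of \eqref{E2.4} equals $(f'\ast 1)(t)=\int_0^t f'(s)\,ds=f(t)-f(0)$ by the absolute continuity of $f$, which is exactly the claim. To establish the scalar identity I would insert the series $e_\alpha(t)=\sum_{k\ge 0}\frac{(-\gamma)^k t^{\alpha k}}{\Gamma(\alpha k+1)}$, integrate term by term (legitimate by uniform convergence on $[0,T]$), and use $I^\alpha t^{\alpha k}=\frac{\Gamma(\alpha k+1)}{\Gamma(\alpha(k+1)+1)}\,t^{\alpha(k+1)}$. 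This yields $\bigl(I^\alpha e_\alpha\bigr)(t)=\sum_{k\ge 0}\frac{(-\gamma)^k t^{\alpha(k+1)}}{\Gamma(\alpha(k+1)+1)}$, hence $\gamma\,\bigl(I^\alpha e_\alpha\bigr)(t)=-\sum_{j\ge 1}\frac{(-\gamma)^j t^{\alpha j}}{\Gamma(\alpha j+1)}=1-e_\alpha(t)$, which is precisely the desired identity.

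The work here is bookkeeping rather than conceptual: the only points needing care are the $L^1$-convolution algebra (associativity and commuting $I^\alpha$ past a convolution) under the hypothesis $f\in H^1(0,T)$, and the termwise fractional integration of the Mittag--Leffler series; both are routine. An equivalent alternative is the Laplace-transform route: from $\mathcal{L}\{e_\alpha\}(s)=\frac{s^{\alpha-1}}{s^\alpha+\gamma}$ one gets $\mathcal{L}\{{}^{\ABC}\Dat f\}(s)=\frac{B(\alpha)}{1-\alpha}\cdot\frac{s^\alpha\widehat f(s)-s^{\alpha-1}f(0)}{s^\alpha+\gamma}$ and $\mathcal{L}\{{}^{\mathcal{AB}}I^\alpha g\}(s)=\frac{1}{B(\alpha)}\bigl((1-\alpha)+\alpha s^{-\alpha}\bigr)\widehat g(s)$; composing, the prefactor simplifies via $\frac{1}{1-\alpha}\bigl((1-\alpha)+\alpha s^{-\alpha}\bigr)=\frac{s^\alpha+\gamma}{s^\alpha}$, which cancels the denominator $s^\alpha+\gamma$ and leaves $\widehat f(s)-f(0)/s$; inverting gives $f(t)-f(0)$. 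I expect the convolution/series justification to be the only mild obstacle, and otherwise the proof is a short computation.
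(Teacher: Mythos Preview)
Your argument is correct. The paper itself does not supply a proof of Lemma~\ref{L1}; it merely cites \cite[Proposition~3.4]{Abdeljawad2017} and uses the identity as a black box. So there is nothing in the paper to compare against beyond the bare statement.

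On its own merits, your convolution/series route is clean and complete: reducing to the scalar identity $e_\alpha+\gamma I^\alpha e_\alpha=1$ and verifying it termwise is exactly the heart of why the $\mathcal{AB}$ integral inverts the $\ABC$ derivative, and the telescoping of the Mittag--Leffler series is carried out correctly. The justification of associativity of convolution and termwise fractional integration under $f\in H^1(0,T)$ is, as you say, routine. Your Laplace-transform alternative is also valid and is in fact closer in spirit to how the original reference \cite{Abdeljawad2017} handles such identities. Either version would serve as a self-contained proof where the paper gives none.
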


The last part of this section deals with the $\Phi_p$ operator.
Let $\Phi_p$ be the function defined on $\mathbb{R}\to\mathbb{R}$ by
\begin{equation*}
    \Phi_p(w) = |w|^{p-2}w, \quad\forall w\in\mathbb{R}.
\end{equation*}
In addition, we have, 
\begin{equation*}
    \Phi_p^{-1}(\cdot) = \Phi_{p^*}(\cdot),\quad\text{where}\quad\frac{1}{p} + \frac{1}{p^*} = 1.
\end{equation*}

The two lemmas below will help us to prove that the proposed problem has a unique solution and to prove the associated Hyers-Ulam stability.
\begin{lemma}[\protect{\cite[Lemma~1.3]{Khan2018}}]\label{L2} \ \\[-.4cm]
\begin{itemize}
\item[i.] Let $x, y\in\mathbb{R}$ such that $|x|,|y|\le k$, then
\begin{equation}\label{E2.5}
    |\Phi_p(x)-\Phi_p(y)| \le (p-1) k^{p-2} |x-y|.
\end{equation}

\item[ii.] If $1<p\le 2$ and $x,y>0$ such that $|x|,|y|\ge k>0$, then
\begin{equation}\label{E2.6}
   |\Phi_p(x)-\Phi_p(y)| \le (p-1) k^{p-2} |x-y|.
\end{equation}
\end{itemize}
\end{lemma}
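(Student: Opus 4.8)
The plan is to deduce both estimates from the mean value theorem applied to $\Phi_p$, using that the factor $r\mapsto r^{p-2}$ is monotone on the range picked out by the hypotheses and is therefore controlled by its value at the endpoint $k$. The only delicate point is to stay away from the origin, where $w\mapsto|w|^{p-2}$ is singular when $p<2$.

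For part~i, I would first note that under the standing assumption $2\le p<\infty$ the map $\Phi_p(w)=|w|^{p-2}w$ belongs to $C^1(\mathbb{R})$ with $\Phi_p'(w)=(p-1)|w|^{p-2}$ (interpreted as $0$ at $w=0$ when $p>2$, and as $1$ everywhere when $p=2$). Assuming $x\neq y$ (the case $x=y$ being trivial) and applying the mean value theorem on the segment joining $x$ and $y$, there is a point $\xi$ strictly between $x$ and $y$ with
\[
\Phi_p(x)-\Phi_p(y)=(p-1)\,|\xi|^{p-2}\,(x-y).
\]
Since $|\xi|\le\max\{|x|,|y|\}\le k$ and $p-2\ge 0$, we get $|\xi|^{p-2}\le k^{p-2}$, and taking absolute values yields \eqref{E2.5}.

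For part~ii, the hypothesis $x,y>0$ lets us write $\Phi_p(w)=w^{p-1}$, which is $C^1$ on $(0,\infty)$ with derivative $(p-1)w^{p-2}$; the condition $x,y\ge k>0$ keeps the relevant interval bounded away from $0$, so no differentiability issue arises even though $1<p\le 2$. The mean value theorem (equivalently, integrating $\Phi_p'$ from $y$ to $x$) gives $\Phi_p(x)-\Phi_p(y)=(p-1)\xi^{p-2}(x-y)$ with $\xi$ between $x$ and $y$, hence $\xi\ge\min\{x,y\}\ge k$. Since $p-2\le 0$ the map $r\mapsto r^{p-2}$ is nonincreasing, so $\xi^{p-2}\le k^{p-2}$, and \eqref{E2.6} follows.

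I do not expect a genuine obstacle: both parts are elementary once the $C^1$ setup is fixed. What must be handled carefully is the interplay of the sign of $p-2$ with the domain restrictions: in part~i one needs $p\ge 2$ both for differentiability at the origin and so that $r\mapsto r^{p-2}$ is nondecreasing (hence maximal at $k$ on $\{|r|\le k\}$), while in part~ii one needs $x,y>0$ to avoid the singularity of $|\cdot|^{p-2}$ and $p\le 2$ so that $r\mapsto r^{p-2}$ is nonincreasing (hence maximal at $k$ on $[k,\infty)$). The hypothesis $p\ge 2$ in part~i is not cosmetic: for $p=\tfrac32$, $x=k$, $y=0$ the left-hand side of \eqref{E2.5} equals $k^{1/2}$ while the right-hand side equals $\tfrac12 k^{1/2}$, so the bound genuinely fails outside the intended range.
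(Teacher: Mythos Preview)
Your argument is correct. Both parts follow cleanly from the mean value theorem once you observe that $\Phi_p'(w)=(p-1)|w|^{p-2}$ and that the hypotheses on $p$ and on the location of $x,y$ relative to $k$ are exactly what is needed to bound $|\xi|^{p-2}$ by $k^{p-2}$ in each case. The counterexample at the end is a nice touch showing why the regime $p\ge 2$ is essential in part~i.

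As for comparison with the paper: the paper does not prove this lemma at all; it simply quotes it from \cite[Lemma~1.3]{Khan2018}. Your mean-value-theorem proof is the standard one and is presumably what appears in that reference.
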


\begin{lemma}\label{L3}
Let $\varphi$ be a continuous bounded function on $[0,T]$ into $\mathbb{R}$. Then
\begin{equation}\label{E2.7}
    { }^{\ABC} \Dat \bigl[\Phi_p(\varphi(t))\bigr] 
    \cdot \varphi(t) \ge \frac{p-1}{p}\, { }^{\ABC} \Dat \bigl(\Phi_p(\varphi(t)) \varphi(t)\bigr).
\end{equation}
\end{lemma}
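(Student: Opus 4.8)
The plan is to reduce the inequality \eqref{E2.7} to a pointwise statement about the integrand in the definition \eqref{E2.1} of the $\ABC$ derivative. Writing $\psi(t) = \Phi_p(\varphi(t))$, both sides of \eqref{E2.7} are of the form $\frac{B(\alpha)}{1-\alpha}\int_0^t (\cdot)' (y) \, E_\alpha[-\gamma(t-y)^\alpha]\,dy$, multiplied respectively by $\varphi(t)$ on the left and by the constant $\frac{p-1}{p}$ on the right, with the right-hand integrand being $(\psi\varphi)'(y)$. Since the Mittag-Leffler kernel $E_\alpha[-\gamma(t-y)^\alpha]$ is nonnegative for $t \ge y$ (it equals $\sum_k (-\gamma)^k (t-y)^{\alpha k}/\Gamma(\alpha k+1)$, a completely monotone function of $(t-y)^\alpha$ for $\alpha\in(0,1)$), and the prefactor $\frac{B(\alpha)}{1-\alpha}$ is positive, it suffices to show the pointwise inequality
\begin{equation*}
\varphi(t)\,\psi'(y) \;\ge\; \frac{p-1}{p}\,(\psi\varphi)'(y)
\end{equation*}
after an appropriate manipulation — but this is not quite right as stated, because $\varphi(t)$ is evaluated at the outer variable $t$ while the derivatives are in $y$. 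The honest route is: first establish the \emph{function-level} inequality $\psi'(s)\,\varphi(s) \ge \frac{p-1}{p}\,(\psi\varphi)'(s)$ for a.e. $s$, i.e.\ an inequality between the two integrands-before-convolution, then convolve both sides against the nonnegative kernel. I would spell this out by noting $(\psi\varphi)' = \psi'\varphi + \psi\varphi'$, so the claimed function-level inequality is equivalent to $\frac1p\,\psi'\varphi \ge \frac{p-1}{p}\,\psi\varphi'$, i.e.\ $\psi'\varphi \ge (p-1)\,\psi\varphi'$.

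The key computational step is then to check $\psi'\varphi \ge (p-1)\psi\varphi'$ where $\psi = \Phi_p(\varphi) = |\varphi|^{p-2}\varphi = |\varphi|^{p-1}\operatorname{sgn}\varphi$. Differentiating, $\psi' = (p-1)|\varphi|^{p-2}\varphi'$ (valid where $\varphi\neq 0$; the sign function's derivative contributes a nonnegative distributional term, or one handles $\varphi=0$ separately), so $\psi'\varphi = (p-1)|\varphi|^{p-2}\varphi\varphi' = (p-1)\psi\varphi'$, giving \emph{equality} at the function level. Hence the function-level inequality holds with equality, and convolving with the nonnegative kernel $E_\alpha[-\gamma(t-y)^\alpha]$ preserves it — but that would give equality in \eqref{E2.7}, not the stated inequality. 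This tension signals that the intended reading involves pulling $\varphi(t)$ (the outer point value) out of the left integral, so that the comparison is genuinely between $\varphi(t)\int_0^t \psi'(y)K\,dy$ and $\frac{p-1}{p}\int_0^t (\psi\varphi)'(y)K\,dy$; the inequality then must come from a convexity/sign argument showing $\varphi(t)\psi'(y) \ge \frac{p-1}{p}(\psi\varphi)'(y)$ for $0\le y\le t$, which would use monotonicity or sign-definiteness of $\varphi$ on $[0,T]$ together with $\psi'(y)=(p-1)|\varphi(y)|^{p-2}\varphi'(y)$.

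The main obstacle, therefore, is pinning down precisely which pointwise inequality the authors intend and under what hypothesis on $\varphi$ it is true: with $\varphi$ merely continuous and bounded (as stated), $\varphi$ need not be differentiable, so $\psi' = (\Phi_p(\varphi))'$ must be interpreted carefully, and the ``extra'' factor $\frac{p-1}{p}$ versus the exact identity $\psi'\varphi = (p-1)\psi\varphi'$ must be reconciled — most plausibly via the elementary scalar inequality $\Phi_p(a)\,b \ge \frac{p-1}{p}\bigl(\Phi_p(b)b - \Phi_p(a)a\bigr) + \Phi_p(a)a \cdot(\text{something})$, i.e.\ a discrete convexity estimate of the type $\langle \Phi_p(a) - \Phi_p(b), a\rangle \ge \frac1p(|a|^p - |b|^p)$ applied to increments of $\varphi$ along the integration, then passed to the limit. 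Once that scalar inequality is isolated, the rest is: (i) apply it to difference quotients of $\varphi$; (ii) integrate against the nonnegative kernel $E_\alpha$; (iii) multiply by $B(\alpha)/(1-\alpha) > 0$ and collect terms to recover \eqref{E2.7}. I would present steps (i)--(iii) in that order, flagging the scalar convexity inequality as the crux and the nonnegativity of the Mittag-Leffler kernel as the enabling structural fact.
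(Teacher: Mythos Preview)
You correctly arrive at the same reduction the paper uses: after the pointwise identity $[\Phi_p(\varphi)]'\,\varphi = \tfrac{p-1}{p}\,[\Phi_p(\varphi)\varphi]'$ (your ``equality at function level''), the claim is equivalent to
\[
\mathcal{X} \;:=\; \frac{B(\alpha)}{1-\alpha}\int_0^t \bigl(\varphi(t)-\varphi(y)\bigr)\,\psi'(y)\,E_\alpha\bigl[-\gamma(t-y)^\alpha\bigr]\,dy \;\ge\; 0, \qquad \psi := \Phi_p(\varphi).
\]
From here, however, your proposal does not close. The integrand $(\varphi(t)-\varphi(y))\psi'(y)$ has no pointwise sign, so nonnegativity of the kernel alone is insufficient; and the scalar inequality you invoke, $\langle \Phi_p(a)-\Phi_p(b),\,a\rangle \ge \tfrac{1}{p}(|a|^p-|b|^p)$, is false as stated (take $p=4$, $a=1$, $b=2$: the left side is $-7$, the right side $-15/4$). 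The convexity idea can be made to work, but only after reorganising the variables: with $w=\psi$ one has $\varphi = \Phi_{p^*}(w) = F'(w)$ for the convex $F(w)=|w|^{p^*}/p^*$, and then $H(y):=F'(w(t))\bigl(w(t)-w(y)\bigr)-\bigl(F(w(t))-F(w(y))\bigr)\ge 0$ satisfies $H(t)=0$ and $H'(y)=-(\varphi(t)-\varphi(y))\psi'(y)$, so an integration by parts using $E_\alpha[-\gamma\tau^\alpha]\ge 0$ and its monotonicity in $\tau$ yields $\mathcal{X}\ge 0$. Your steps (i)--(iii) do not supply this mechanism.

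The paper's route after the same reduction is structurally different. It writes $\varphi(t)-\varphi(z)=\int_z^t\varphi'(s)\,ds$, swaps the order of integration, and then substitutes $\varphi'(s)=\tfrac{1}{p-1}|\varphi(s)|^{2-p}[\Phi_p(\varphi(s))]'$ to recognise the resulting expression as $\tfrac{|\varphi(s)|^{2-p}}{2(p-1)}\bigl(E_\alpha[-\gamma(t-s)^\alpha]\bigr)^{-1}\partial_s\bigl(G(s)^2\bigr)$ with $G(s)=\int_0^s[\Phi_p(\varphi)]'E_\alpha\,dz$. The boundedness hypothesis on $\varphi$ enters precisely here, via $|\varphi|^{2-p}\ge\sigma>0$ (this is where $p\ge 2$ is used), after which an integration by parts exploiting $\partial_s\bigl(E_\alpha[-\gamma(t-s)^\alpha]\bigr)^{-1}\le 0$ gives the conclusion. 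So both arguments ultimately rest on an integration by parts against the monotone Mittag-Leffler kernel; the paper's ``perfect-square'' rewriting plays the role that a correctly formulated convexity inequality would play in your approach.
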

\begin{proof}
The inequality \eqref{E2.7} can be rewritten as
\begin{equation*}
     \mathcal{X} := { }^{\ABC} \Dat \bigl[\Phi_p(\varphi(t))\bigr] \cdot \varphi(t) - \frac{p-1}{p} { }^{\ABC} \Dat \bigl(\Phi_p(\varphi(t))\varphi(t)\bigr) \ge0.
\end{equation*}
Since 
\begin{equation}\label{E2.8}
   \bigl[\Phi_p(\varphi(t))\bigr]^{\prime} 
    = \bigl[|\varphi(t)|^{p-2} \varphi(t)\bigr]^{\prime}
    = \Bigl[e^{(p-2) \ln|\varphi(t)|} \varphi(t)\Bigr]^{\prime} 
    = (p-1) |\varphi(t)|^{p-2} \varphi^{\prime}(t),
\end{equation}
we have
\begin{equation*}
\bigl[\Phi_p(\varphi(t))\varphi(t)\bigr]^{\prime} 
= \bigl[\Phi_p(\varphi(t))\bigr]^{\prime}\varphi(t) + \Phi_p(\varphi(t))\varphi^{\prime}(t) 
= \frac{p}{p-1} \bigl[\Phi_p(\varphi(t))\bigr]^{\prime} \varphi(t).
\end{equation*}
Consequently,
\begin{equation*}
\begin{split}
\mathcal{X} &= { }^{\ABC} \Dat \bigl[\Phi_p(\varphi(t))\bigr] \cdot \varphi(t) - \frac{p-1}{p} { }^{\ABC} \Dat \bigl(\Phi_p(\varphi(t))\varphi(t)\bigr)\\
&= \frac{B(\alpha)}{1-\alpha} \biggl( \varphi(t) 
    \int_0^t \bigl[\Phi_p(\varphi(z))\bigr]^{\prime} 
   E_{\alpha}\bigl[-\gamma (t-z)^{\alpha}\bigr] \,dz - \frac{p-1}{p} 
  \int_0^t \bigl[\Phi_p(\varphi(z))\varphi(z)\bigr]^{\prime} 
   E_{\alpha}\bigl[-\gamma (t-z)^{\alpha}\bigr] \,dz\biggr)\\ 
&= \frac{B(\alpha)}{1-\alpha} \biggl(\varphi(t) 
    \int_0^t \bigl[\Phi_p(\varphi(z))\bigr]^{\prime} 
    E_{\alpha}\bigl[-\gamma (t-z)^{\alpha}\bigr] \,dz 
- \int_0^t \bigl[\Phi_p(\varphi(z))\bigr]^{\prime} 
   \varphi(z) E_{\alpha}\bigl[-\gamma (t-z)^{\alpha}\bigr]\,dz\biggr).
\end{split}
\end{equation*}
Thus,
\begin{equation*}
\begin{split}
\mathcal{X} &= \frac{B(\alpha)}{1-\alpha} 
\int_0^t \bigl(\varphi(t) - \varphi(z)\bigr)
\bigl[\Phi_p(\varphi(z))\bigr]^{\prime} 
E_{\alpha}\bigl[-\gamma (t-z)^{\alpha}\bigr] \,dz\\ 
&= \frac{B(\alpha)}{1-\alpha} \int_0^t
\Bigl(\int_z^t \varphi^{\prime}(s)\,ds\Bigr) \bigl[\Phi_p(\varphi(z))\bigr]^{\prime} E_{\alpha}\bigl[-\gamma (t-z)^{\alpha}\bigr]\,dz\\  
&= \frac{B(\alpha)}{1-\alpha} 
\int_0^t \varphi^{\prime}(s) 
\Bigl( \int_0^s \bigl[\Phi_p(\varphi(z))\bigr]^{\prime} 
     E_{\alpha}\bigl[-\gamma (t-z)^{\alpha}\bigr]\,dz\Bigr)\,ds\\
&\stackrel{\eqref{E2.8}}{=} \frac{B(\alpha)}{2(1-\alpha)(p-1)} 
\int_0^t |\varphi(s)|^{2-p} \bigl(E_{\alpha}[-\gamma (t-s)^{\alpha}]\bigr)^{-1} 
\frac{\partial}{\partial s}\biggl(\Bigl( 
\int_0^s \bigl[\Phi_p(\varphi(z))\bigr]^{\prime} 
E_{\alpha}\bigl[-\gamma (t-z)^{\alpha}\bigr]\,dz\Bigr)^2\biggr)\,ds.
\end{split}
\end{equation*}
Since $p\ge2$ and $\varphi$ is bounded, then $|\varphi(s)|^{p-2} \le C$ and that means $|\varphi(s)|^{2-p} \ge \sigma:=\frac{1}{C}$. Therefore,
\begin{equation*}
   \mathcal{X} \ge \frac{\sigma B(\alpha)}{2(1-\alpha)(p-1)} 
    \int_0^t \bigl(E_\alpha [-\gamma (t-s)^{\alpha}]\bigr)^{-1} 
    \frac{\partial}{\partial s}
    \biggl(\Bigl( \int_0^s \bigl[\Phi_p(\varphi(z))\bigr]^{\prime} 
    E_{\alpha}\bigl[-\gamma (t-z)^{\alpha}\bigr] \,dz\Bigr)^2\biggr) \,ds.
\end{equation*}
Knowing that 
\begin{equation*}
   \frac{\partial}{\partial s}
     \Bigl(\frac{1}{E_\alpha [-\gamma (t-s)^\alpha]}\Bigr) 
    = - \frac{\gamma (t-s)^{\alpha-1} E_{\alpha, \alpha} [-\gamma (t-s)^\alpha]}{E_\alpha[-\gamma (t-s)^\alpha]^2} \le0, \ \text{ and } \ E_\alpha(0) = 1.
\end{equation*}
Then, an integration by part gives
\begin{equation*}
\begin{split}
    \mathcal{X} \ge& \frac{\sigma B(\alpha)}{2(1-\alpha)(p-1)} 
    \biggl[\bigl(E_{\alpha}[-\gamma (t-s)^{\alpha}]\bigr)^{-1} 
    \Bigl( \int_0^s \bigl[\Phi_p(\varphi(z))\bigr]^{\prime} E_{\alpha}\bigl[-\gamma (t-z)^{\alpha}\bigr] \,dz\Bigr)^2 \biggr]_{s=0}^{s=t}\\
   &+ \frac{\sigma\gamma B(\alpha)}{2(1-\alpha)(p-1)}\int_0^t
    \frac{(t-s)^{\alpha-1} E_{\alpha, \alpha}[-\gamma (t-s)^{\alpha}]}{E_{\alpha}[-\gamma (t-s)^{\alpha}]^2} 
    \Bigl( \int_0^s \bigl[\Phi_p(\varphi(z))\bigr]^{\prime} E_{\alpha}\bigl[-\gamma (t-z)^{\alpha}\bigr] \,dz\Bigr)^2\,ds.
\end{split}
\end{equation*}
Thus,
\begin{equation*}
\begin{split}
\mathcal{X} &\ge \frac{\sigma B(\alpha)}{2(1-\alpha)(p-1)} 
\bigg[\Bigl( \int_0^s \bigl[\Phi_p(\varphi(z))\bigr]^{\prime} 
       E_\alpha \bigl[-\gamma (t-z)^{\alpha}\bigr] \,dz\Bigr)^2\\
&\quad + \int_0^t \frac{(t-s)^{\alpha-1} E_{\alpha, \alpha}[-\gamma (t-s)^{\alpha}]}{E_{\alpha}[-\gamma (t-s)^{\alpha}]^2} 
\Bigl( \int_0^s \bigl[\Phi_p(\varphi(z))\bigr]^{\prime} 
    E_{\alpha}\bigl[-\gamma (t-z)^{\alpha}\bigr] dz\Bigr)^2\,ds\bigg].
\end{split}
\end{equation*}
This concludes the proof.
\end{proof}

\section{Mathematical model with vital dynamics}\label{S3}
The SI model is a mathematical framework used to study the dynamics of an epidemic within a population. 
It divides the population ($N$) into different compartments based on their disease status. Let's look at the compartments and how the epidemic spreads from one to the other:
\begin{description}
    \item Susceptible $S$: This compartment represents individuals who are susceptible to the disease and can become infected if they come in contact with an infectious individual.

\item Infected $I$: This compartment includes individuals who are currently infected with the disease and can transmit it to susceptible individuals through various modes of transmission, such as respiratory droplets, physical contact, contaminated surfaces, etc.
\end{description} 

The following Table~\ref{Tab1} contains the transmission coefficients for the proposed SI model, and Figure~\ref{F1} summarizes how the epidemic spreads from one compartment to another.
\begin{table}[H]
\begin{minipage}{0.49\linewidth}
\centering
\setlength{\tabcolsep}{0.5cm}
\caption{Vital dynamics parameters of the SI model.}\label{Tab1}
\begin{tabular}{|c||c|}
\hline 
Symbol & Description\\
\hline \hline 
$\mu$ ($\ge0$) & Birth rate\\
\hline
$\beta$ ($\ge0$) & Effective contact rate\\
\hline
$\eta$ ($\ge0$) & Natural mortality rate\\ 
\hline
\end{tabular}
\end{minipage}
\hfill
\begin{minipage}{0.49\linewidth}
\centering
\begin{tikzpicture}[node distance=4cm]
\node (S) [diamond, draw, minimum size=0.8cm, fill=blue!30] {S}; 
\node (I) [diamond, draw, minimum size=0.8cm, fill=red!40, right of=S] {I};
\draw[->] (-1.4,0) -- ++(S) node[midway,above]{$\mu N$};
\draw[->] (S.south) -| (0,-1.2) node[near end,left]{$\eta S$};
\draw [->] (S) -- (I) node[midway,above]{$\beta SI$};
\draw[->] (I.south) -| (4,-1.2) node[near end,left]{$\eta I$};
\draw [->] (S.north) -| (0,1) -- (4,1) node[midway,above]{$u S$} -| (I.north) ;
\end{tikzpicture}

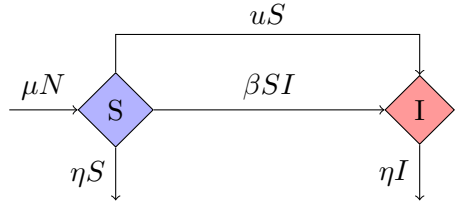
\captionof{figure}{Conversion scheme for the proposed SI model.}\label{F1}
\end{minipage}
\end{table}

Let $\mathcal{I}_T := [0,T]$. Given the assumptions explained above, our SI model is formulated as follows
\begin{equation}\label{E3.1}
\begin{cases}
 &{ }^{\ABC} \Dat \bigl[\Phi_p(S(t))\bigr] 
 = \mu N-\beta S I - uS - \eta S,\\
&{ }^{\ABC} \Dat \bigl[\Phi_p(I(t))\bigr] 
  = \beta S I + uS - \eta I,
\end{cases}  \quad t\in \mathcal{I}_T,
\end{equation}
with the total population $N(t)=S(t)+I(t)$ and supplied with the positive initial data 
\begin{equation}\label{E3.2}
   S(0)=S_0 \ge0, \text{ and } \ I(0) = I_0 \ge0.
\end{equation}
The term $\mu N - \beta S I - uS - \eta S$ in the first equation in \eqref{E3.1} represents the change in the susceptible population over time, influenced by infection (quadratic contact term) and external factors, where
\begin{equation}\label{E3.3}
     u \in \mathcal{U}_{ad} = \bigl\{u \text{ is Lebesgue integrable in } \mathcal{I}_T \ | \ 0\le u(t) < 1 \bigr\}.
\end{equation}
The variable $u$ in \eqref{E3.1} represents the optimal control (vaccination), while the term $uS$ denotes the vaccinated susceptible individuals. By this variable $u$, we want to minimize the spread of infection and maximize the population of susceptible individuals, cf.\ Section~\ref{S6}.

Let $\vartheta = (x,y)=(S,I)$ such that $\vartheta_0 = (x_0,y_0) = (S_0,I_0)$. We consider the vector function $\Psi$ defined by
\begin{equation*}
   \Psi(\vartheta(t)) = \bigl(\Psi_1(\vartheta(t)), \Psi_2(\vartheta(t))\bigr),
\end{equation*}
where
\begin{equation}\label{E3.4}
\begin{cases}
   \Psi_1(\vartheta(t)) = \mu(x(t) + y(t)) - \beta x(t)y(t) - u(t)x(t) - \eta x(t),\\
   \Psi_2(\vartheta(t)) = \beta x(t)y(t) + u(t)x(t) - \eta y(t).
\end{cases} \ t\in \mathcal{I}_T,
\end{equation}
with the initial condition $\vartheta_0 = (x_0,y_0)$. 
Furthermore, if the solutions of \eqref{E3.1} are bounded, then $\Psi$ satisfies a Lipschitz condition 
\begin{equation}\label{E3.5}
    \bigl\|\Psi(\tilde{\vartheta}(t)) - \Psi(\bar{\vartheta}(t))\bigr\| 
    \le \xi \bigl\|\tilde{\vartheta}(t) - \bar{\vartheta}(t)\bigr\|.
\end{equation}
The supremum norm of $\vartheta$, denoted by $\|\cdot\|_{\mathcal{I}_T}$, is defined as 
    $\|\vartheta\|_{\mathcal{I}_T} := \sup_{t\in\mathcal{I}_T} \|\vartheta(t)\|$.
It is obvious that $\mathcal{C}(\mathcal{I}_T, \mathbb{R}^2)$, equipped with the norm $\|\vartheta\|_{\mathcal{I}_T}$, forms a Banach space. 
Consequently, the problem \eqref{E3.1}--\eqref{E3.2} can be reformulated in 
$\mathcal{C}(\mathcal{I}_T, \mathbb{R}^2)$ as  
\begin{equation}\label{E3.6}
\begin{cases}
{ }^{\ABC} \Dat \bigl[ \Phi_p(\vartheta(t))\bigr] = \Psi(\vartheta(t)),\\
\vartheta(0)=\vartheta_0,
\end{cases} \quad t\in\mathcal{I}_T,
\end{equation}
where
\begin{equation*}
\Phi_p (\vartheta(t)) 
= \bigl(\Phi_p(x(t)), \Phi_p(y(t))\bigr).
\end{equation*}

\section{Existence and uniqueness of the solution}\label{S4}
In this section we will prove the existence and uniqueness of solutions of the nonlinear fractional system \eqref{E3.1}--\eqref{E3.2} using a fixed point theorem.
\begin{theorem}[Boundedness]\label{T1}
All solutions of the SI model \eqref{E3.1}--\eqref{E3.2} are bounded if
\begin{equation}\label{E4.1}
    |\mu - \eta| < \frac{2^{2-p} B(\alpha)}{1-\alpha} = \frac{\alpha + (1-\alpha)\Gamma(\alpha)}{(1-\alpha)\Gamma(\alpha)2^{p-2}},
\end{equation}
i.e.\ if $\Lambda := 2^{2-p} - \frac{(1-\alpha) |\mu - \eta|}{B(\alpha)} > 0$ holds.
\end{theorem}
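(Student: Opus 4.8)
The plan is to remove the nonlinear transmission terms by summing the two equations of \eqref{E3.1}, and then to turn the resulting fractional identity for the total population $N=S+I$ into a Grönwall-type inequality.

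Summing the two lines of \eqref{E3.1}, the terms $\beta SI$ and $uS$ cancel, so by linearity of the $\ABC$ operator one gets
\begin{equation*}
{ }^{\ABC}\Dat\bigl[\Phi_p(S(t))+\Phi_p(I(t))\bigr]=(\mu-\eta)\,N(t),\qquad t\in\mathcal{I}_T.
\end{equation*}
Put $V(t):=\Phi_p(S(t))+\Phi_p(I(t))$. Since $S,I\ge0$, we have $V(t)=S(t)^{p-1}+I(t)^{p-1}\ge0$, and the power-mean (convexity) inequality with exponent $p-1\ge1$ gives the pointwise estimate $V(t)\ge 2^{2-p}N(t)^{p-1}$ — this is exactly where the factor $2^{2-p}$ in \eqref{E4.1} enters. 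Applying the $\mathcal{AB}$ integral ${ }^{\mathcal{AB}}I^\alpha$ to the identity above and using Lemma~\ref{L1} together with Definition~\ref{D3},
\begin{equation*}
V(t)=V(0)+(\mu-\eta)\left[\frac{1-\alpha}{B(\alpha)}N(t)+\frac{\alpha}{B(\alpha)\Gamma(\alpha)}\int_0^t N(s)(t-s)^{\alpha-1}\,ds\right].
\end{equation*}

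If $\mu\le\eta$, the bracket is $\ge0$ while its prefactor is $\le0$, so $V(t)\le V(0)$ and hence $N(t)^{p-1}\le 2^{p-2}V(0)$ immediately (the hypothesis is not even needed in this case). If $\mu>\eta$, I would use the elementary bound $N\le1+N^{p-1}$ to linearise the lower-order occurrences of $N$, insert $V(t)\ge 2^{2-p}N(t)^{p-1}$, and collect all $N(t)^{p-1}$ contributions on the left-hand side; the coefficient produced is precisely $\Lambda=2^{2-p}-\frac{(1-\alpha)|\mu-\eta|}{B(\alpha)}$, which is $>0$ by \eqref{E4.1}. Dividing by $\Lambda$ then leaves
\begin{equation*}
N(t)^{p-1}\le C(T)+\frac{(\mu-\eta)\,\alpha}{\Lambda B(\alpha)\Gamma(\alpha)}\int_0^t N(s)^{p-1}(t-s)^{\alpha-1}\,ds,
\end{equation*}
with $C(T)$ a finite constant depending only on $V(0)$, the parameters, and $T$. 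A fractional Grönwall inequality with weakly singular kernel finally gives $N(t)^{p-1}\le C(T)E_\alpha(c\,t^\alpha)$ for a suitable $c>0$, so $N$ — and hence $S$ and $I$, since $0\le S,I\le N$ — is bounded on $\mathcal{I}_T$.

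The delicate point is the genuine nonlinearity: one must pass from an inequality mixing $N$ and $N^{p-1}$ (the former from the term linear in $N$ inside the $\mathcal{AB}$ integral, the latter from $V$ via the power-mean bound) to a single closed inequality in $N^{p-1}$, and the bookkeeping of the $N(t)^{p-1}$ coefficients must make $\Lambda>0$ come out as exactly the condition needed in order to divide. The case split $\mu\gtrless\eta$ and the passage between $V$ and $N$ through convexity are the remaining routine points; the argument is uniform in $p\in[2,\infty)$, the case $p=2$ reducing to $V=N$ with $2^{2-p}=1$.
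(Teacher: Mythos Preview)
Your approach is essentially the paper's: sum the two equations to kill $\beta SI$ and $uS$, apply ${ }^{\mathcal{AB}}I^\alpha$, invoke the convexity bound $\Phi_p(S)+\Phi_p(I)\ge 2^{2-p}N^{p-1}$ to generate the coefficient $\Lambda$, and close with a Grönwall inequality. The only cosmetic difference is in the last step: the paper reduces to a Grönwall inequality in $|N|$ by splitting $|N|\le1$ versus $|N|\ge1$ (using $|N|\le|N|^{p-1}$ in the latter), whereas you reduce to one in $N^{p-1}$ via $N\le1+N^{p-1}$; both work.

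There is, however, a circularity in your write-up. You use $S,I\ge0$ throughout---to write $V=S^{p-1}+I^{p-1}$, to dispose of the case $\mu\le\eta$ by a sign argument, and at the end to get $0\le S,I\le N$. But in the paper's logical order positivity is Theorem~\ref{T2}, which is proved \emph{after} boundedness and in fact uses it (through Lemma~\ref{L3}, which requires $\varphi$ bounded). The paper sidesteps this by arguing with $|N|$ and invoking Tartar's inequality $|\Phi_p(x)+\Phi_p(y)|\ge2^{2-p}|x+y|^{p-1}$, valid without any sign hypothesis on $x,y$, together with $|\mu-\eta|$ rather than a case split on the sign of $\mu-\eta$. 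Your proof is easily repaired in the same way, but as stated it assumes what has not yet been proved.
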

\begin{proof}
Let $(x,y)$ be a solution of \eqref{E3.1}--\eqref{E3.2}. We have
\begin{equation*}
{ }^{\ABC} \Dat \bigl[ \Phi_p(x) + \Phi_p(y)\bigr] 
= { }^{\ABC} \Dat \bigl[ \Phi_p(x)\bigr]
+ { }^{\ABC} \Dat \bigl[ \Phi_p(y)\bigr]
= (\mu - \eta) N,
\end{equation*}
with $N(t) = x(t) + y(t)$. 
Applying the $\mathcal{AB}$ fractional integral \eqref{E2.3}, we get
\begin{equation*}
\bigl(\Phi_p(x) + \Phi_p(y)\bigr) - \bigl(\Phi_p(x_0) + \Phi_p( y_0)\bigr) = (\mu - \eta) { }^{\mathcal{AB}} I^{\alpha}N.
\end{equation*}
Using Tartar’s inequality \cite[Lemma 2.2.]{Okazawa2002}, we obtain
\begin{equation*}
\begin{split}
  \bigl| \Phi_p(x) + \Phi_p(y) \bigr| 
  &= \bigl| |x|^{p-2}x + |y|^{p-2}y \bigr|\\
  &= \bigl| |x|^{p-2}x - |y|^{p-2}(-y) \bigr|
  = \bigl| |x|^{p-2}x - |(-y)|^{p-2}(-y) \bigr|\\
  &\ge 2^{2-p} |x - (-y)|^{p-1} = 2^{2-p} |x + y|^{p-1} = 2^{2-p} |N|^{p-1}.
\end{split}
\end{equation*}
Since we assume positive initial data $x_0\ge0$ and $y_0\ge0$, it follows
\begin{align*}
  \bigl| \Phi_px_0) + \Phi_p(y_0) \bigr| 
  &= \bigl| |x_0|^{p-2}x_0 + |y_0|^{p-2}y_0 \bigr| 
  = |x_0|^{p-2}x_0 + |y_0|^{p-2}y_0\\
  &\le |x_0 + y_0|^{p-2}x_0 + |x_0 + y_0|^{p-2}y_0 \le |N_0|^{p-2}N_0.
\end{align*}
By the inverse triangular inequality, we have
\begin{equation*}
    2^{2-p} |N|^{p-1} - |N_0|^{p-2}N_0 \le |\mu - \eta| { }^{\mathcal{AB}} I^{\alpha} |N|.
\end{equation*}
If $N \in [-1, 1]$, then $N$ is bounded. 
Else, we get $|N| \le |N|^{p-1}$ (since $p\ge 2$). After that,
\begin{equation*}
\begin{split}
   2^{2-p} |N| - |N_0|^{p-2}N_0 &\le |\mu - \eta| { }^{\mathcal{AB}} I^{\alpha} |N|\\
   &\le \frac{(1-\alpha) |\mu - \eta|}{B(\alpha)}|N| + \frac{\alpha |\mu - \eta|}{B(\alpha)\Gamma(\alpha)}\int_0^t |N(s)| (t-s)^{\alpha-1} \,ds,
\end{split}
\end{equation*}
which gives
\begin{equation*}
\Lambda |N| \le |N_0|^{p-2}N_0 + \frac{\alpha (\mu + \eta)}{B(\alpha)\Gamma(\alpha)}\int_0^t |N(s)| (t-s)^{\alpha-1} \,ds.
\end{equation*}
Finally, Gronwall's inequality yields
\begin{equation*}
\begin{split}
   |N(t)| &\le \frac{|N_0|^{p-2}N_0}{\Lambda} 
   \exp\Bigl( \frac{\alpha (\mu + \eta)}{\Lambda B(\alpha)\Gamma(\alpha)} 
    \int_0^t (t-s)^{\alpha-1} \,ds\Bigr)\\
   &\le \frac{|N_0|^{p-2}N_0}{\Lambda} \exp\Bigl( \frac{\alpha (\mu + \eta) t^\alpha}{\Lambda B(\alpha)\Gamma(\alpha) \alpha}\Bigr)\\
   &= \frac{|N_0|^{p-2}N_0}{\Lambda} \exp\Bigl( \frac{(\mu + \eta) t^\alpha}{\Lambda B(\alpha)\Gamma(\alpha)}\Bigr).
\end{split}
\end{equation*}
This establishes the boundedness of solutions.
\end{proof}

In the rest of this work, we assume that the boundedness condition \eqref{E4.1} is satisfied.
\begin{theorem}[Positivity]\label{T2}
All solutions of the SI model \eqref{E3.1}--\eqref{E3.2} are positive for positive 
data.
\end{theorem}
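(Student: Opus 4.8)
The plan is to argue by contradiction and to exploit the \emph{quasi-positive} structure of the right-hand side. Rewriting the reaction terms of \eqref{E3.1} as $\Psi_1=\mu N-(\beta I+u+\eta)\,S$ and $\Psi_2=(\beta I+u)\,S-\eta I$, one sees that every outflow term in the $S$-equation carries the factor $S$ and every outflow term in the $I$-equation carries the factor $I$; in particular $\Psi_1|_{S=0}=\mu N\ge0$ and $\Psi_2|_{I=0}=uS\ge0$ as soon as $S,I\ge0$. Since $\Phi_p$ (hence $\Phi_{p^*}=\Phi_p^{-1}$) is an odd, strictly increasing bijection of $\mathbb{R}$, we have $S(t)\ge0\iff\Phi_p(S(t))\ge0$, and likewise for $I$; so it is equivalent to prove $W_S:=\Phi_p(S)\ge0$ and $W_I:=\Phi_p(I)\ge0$ on $\mathcal{I}_T$, where \eqref{E3.6} now reads ${}^{\ABC}\Dat W_S=\Psi_1$, ${}^{\ABC}\Dat W_I=\Psi_2$ with $S=\Phi_{p^*}(W_S)$, $I=\Phi_{p^*}(W_I)$, and $W_S(0)=\Phi_p(S_0)\ge0$, $W_I(0)=\Phi_p(I_0)\ge0$.

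I would then test the $W_S$-equation against the truncation $\min\{W_S,0\}=-W_S^-$ and the $W_I$-equation against $-W_I^-$, and add. For the left-hand sides I use the convexity inequality for the $\ABC$ derivative applied to the convex $C^1$ function $w\mapsto\tfrac12(w^-)^2$ --- this is the same mechanism as Lemma~\ref{L3} (which is precisely this inequality for $w\mapsto\tfrac1{p^*}|w|^{p^*}$), reproduced via the chain rule \eqref{E2.8} and the kernel manipulation of its proof --- to get
\begin{equation*}
{}^{\ABC}\Dat W_S\cdot(-W_S^-)+{}^{\ABC}\Dat W_I\cdot(-W_I^-)\;\ge\;\tfrac12\,{}^{\ABC}\Dat\!\bigl[(W_S^-)^2+(W_I^-)^2\bigr].
\end{equation*}
For the right-hand sides, on the set $\{W_S<0\}$ (so $S=\Phi_{p^*}(W_S)=-(W_S^-)^{p^*-1}<0$) a direct computation gives $\Psi_1\cdot(-W_S^-)=(\mu-\beta I-u-\eta)(W_S^-)^{p^*}-\mu I\,W_S^-$; splitting $I=I^+-I^-$, discarding the terms with the good sign, and invoking the a~priori bound of Theorem~\ref{T1}, this is dominated by $C(W_S^-)^{p^*}+C\,I^-W_S^-$, and symmetrically $\Psi_2\cdot(-W_I^-)\le C(W_I^-)^{p^*}+C\,S^-W_I^-$. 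Adding, applying the $\mathcal{AB}$ integral \eqref{E2.3} (whose kernel is nonnegative, hence order-preserving) together with Lemma~\ref{L1}, and using $W_S^-(0)=W_I^-(0)=0$, I arrive at an integral inequality controlling $(W_S^-)^2+(W_I^-)^2$ by ${}^{\mathcal{AB}}I^{\alpha}$ of quadratic/cross terms in $W_S^-,W_I^-$, from which a fractional Gronwall argument --- as already used in the proof of Theorem~\ref{T1} --- forces $W_S^-\equiv W_I^-\equiv0$, i.e.\ $S,I\ge0$.

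The hard part is to carry the two ``tested'' inequalities through simultaneously and in compatible scaling. First, the $\ABC$-convexity inequality for the truncation $w\mapsto\tfrac12(w^-)^2$ must be justified as in Lemma~\ref{L3}, but $s\mapsto(W_S^-(s))^2$ is only $C^1$ with locally Lipschitz derivative; this is still enough to repeat the integration-by-parts step there, provided one first replaces $w\mapsto\min\{w,0\}$ by a smooth convex decreasing $\chi_\varepsilon\uparrow\min\{\cdot,0\}$, runs the computation for $\chi_\varepsilon(W_S)$, and passes to the limit using the boundedness of Theorem~\ref{T1} and dominated convergence. Second --- and this is the genuinely delicate point --- because the reaction estimates are only quasi-positive, the bad cross terms $\mu I^-W_S^-$ and $C\,S^-W_I^-$ couple the two equations, so $S$ and $I$ must be treated jointly (testing $S$ alone does not close), and the mismatch between $(W_S^-)^{p^*}$ on the right and $(W_S^-)^2$ on the left means the Gronwall step genuinely needs the uniform bound from Theorem~\ref{T1} to homogenise the powers --- precisely where the nonlinearity of the operator ${}^{\ABC}\Dat[\Phi_p(\cdot)]$ bites. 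A more elementary alternative avoids Lemma~\ref{L3}: if $m:=\min_{[0,T]}\min\{S,I\}<0$ is attained at $t_1$, say $S(t_1)=m$, then $\Phi_p(S(t_1))=\min_{[0,t_1]}\Phi_p(S)$, so the extremum principle for the $\ABC$ derivative gives ${}^{\ABC}\Dat[\Phi_p(S)](t_1)\le0$, whereas $\Psi_1(t_1)=\mu N(t_1)-S(t_1)(\beta I(t_1)+u(t_1)+\eta)\ge0$ once $N(t_1)\ge0$ is secured; this forces $\Psi_1(t_1)=0$ and then $S(t_1)=0$, a contradiction. The subtlety of that route is that non-locality rules out a naive pointwise ``first-crossing'' argument, so the extremum principle must be invoked through the integral representation \eqref{E2.4} rather than pointwise.
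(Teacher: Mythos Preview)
Your route differs structurally from the paper's. The paper is \emph{sequential}: it first shows $S\ge0$ on its own, multiplying the $S$-equation by $S^{-}$ and invoking Lemma~\ref{L3} in its original form (for $\varphi=S$, not a separate convexity inequality for $w\mapsto\tfrac12(w^-)^2$) to obtain ${}^{\ABC}\Dat\bigl[|S|^{p-2}(S^-)^2\bigr]<0$; it then concludes via the Laplace transform together with $S_0^-=0$, not via Gronwall. With $S\ge0$ already in hand, the $I$-argument becomes a clean first-crossing: at the first $\tau$ with $\Phi_p(I(\tau))=0$ one has ${}^{\ABC}\Dat[\Phi_p(I)](\tau)=u(\tau)S(\tau)\ge0$, and the $\ABC$ mean value theorem forces $\Phi_p(I)\ge0$ just beyond $\tau$, a contradiction. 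Your simultaneous testing scheme is more systematic and would transfer to genuinely coupled systems, but the paper's decoupling eliminates the cross terms entirely and needs no Gronwall for $I$.

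There is, however, a real obstruction in your main route that boundedness does not remove. By passing to $W_S=\Phi_p(S)$ and testing ${}^{\ABC}\Dat W_S=\Psi_1$ against $-W_S^-$, the outflow terms produce $(W_S^-)^{p^*}$ and the cross terms produce $(W_I^-)^{p^*-1}W_S^-$; since $p\ge2$ forces $p^*\le2$, these are \emph{sublinear} in $V:=(W_S^-)^2+(W_I^-)^2$. The resulting inequality $V(t)\le C\,{}^{\mathcal{AB}}I^\alpha\bigl[V^{p^*/2}\bigr](t)$ with $V(0)=0$ does \emph{not} force $V\equiv0$ when $p^*<2$: the comparison problem $v'=Cv^{p^*/2}$, $v(0)=0$, admits nontrivial solutions. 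The bound of Theorem~\ref{T1} controls $W^-$ from above, but the difficulty sits at $W^-\to0$, where $(W^-)^{p^*-2}\to\infty$; so ``homogenising the powers'' goes the wrong way. The cure is exactly what the paper does implicitly: test against $S^-$ rather than $W_S^-=\Phi_p(S)^-$, so that the left side carries the power $p\ge2$ while the right side is quadratic in $S^-$, and then close by a Laplace/sign argument instead of Gronwall. Your extremum-principle alternative is close in spirit to the paper's $I$-step, but, as you flag, it requires $N(t_1)\ge0$, which is not available before positivity; the paper bypasses this by already having $S\ge0$ when it turns to $I$.
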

\begin{proof}
Let $(x,y)$ be a solution of the SI model \eqref{E3.1}--\eqref{E3.2}.

First, suppose $x<0$. Applying \eqref{E2.4} to the first equation of \eqref{E3.1}, we have
\begin{equation*}
   \Phi_p(x) - \Phi_p(x_0) = { }^{\mathcal{AB}} I^{\alpha} (\mu N - \beta x y - ux - \eta x).
\end{equation*}
Since $x_0 \ge 0$ and by the hypothesis, we get
\begin{equation*}
    { }^{\mathcal{AB}} I^{\alpha} (\mu N - \beta x y - ux - \eta x) < 0.
\end{equation*}
Given that ${ }^{\mathcal{AB}} I^{\alpha}$ is increasing, we get
\begin{equation}\label{E4.2}
   \mu N - \beta x y  < ux + \eta x < 0,\quad\text{i.e.}\quad
\end{equation}
From \eqref{E3.1} and \eqref{E4.2}, we get
\begin{equation}\label{E4.3}
    { }^{\ABC} \Dat \bigl[\Phi_p(x)\bigr]
     = \mu N-\beta xy - ux - \eta x < - ux - \eta x .
\end{equation}
Knowing that $x<0$, so its negative part
$x^- = \max(0, -x) >0$. 
Knowing that $x$ is bounded, so by multiplying  \eqref{E4.3} by $x^-$, using \eqref{E2.7} and \eqref{E4.2}, we get
\begin{equation*}
  \frac{p-1}{p} { }^{\ABC} \Dat \bigl[ |x|^{p-2} (x^-)^2\bigr] 
  \le { }^{\ABC} \Dat \bigl[\Phi_p(x)\bigr] x^- < - (u + \eta) (x^-)^2 < 0,
\end{equation*}
i.e.\
\begin{equation*}
{ }^{\ABC} \Dat \bigl[ |x|^{p-2} (x^-)^2\bigr] < 0.
\end{equation*}
Applying the Laplace transform to this equation (cf.\ \cite{Din2021}),
we have
\begin{equation*}
  \frac{B(\alpha) s^\alpha}{\alpha + (1-\alpha)s^\alpha} 
  \mathscr{L}\bigl(|x|^{p-2} (x^-)^2\bigr) - \frac{B(\alpha) s^{\alpha-1}}{\alpha + (1-\alpha)s^\alpha} |x_0|^{p-2} (x_0^-)^2 < 0.
\end{equation*}
Since $x_0 \ge 0$, then $x_0^- = 0$. Accordingly, we obtain
\begin{equation*}
   \mathscr{L}\bigl(|x|^{p-2} (x^-)^2\bigr) < 0.
\end{equation*}
Applying the inverse Laplace transform to this inequality, we have that $|x|^{p-2} (x^-)^2 < 0$, which contradicts the hypothesis. So we get $x \ge 0$.

Next, to establish the positivity of $y$, we use the same methodology as in \cite{AltafKhan2020, Phukan2023}. 
We assume that there exists a $0 < \tau \in \mathcal{I}_T$ such that 
\begin{equation*}
    \Phi_p(y(\tau)) = 0 \ \text{ and } \ \Phi_p(y(t)) < 0, \ \ \forall t \in (\tau, \ell],
\end{equation*}
where $\ell\in\mathcal{I}_T$ is sufficiently close to $\tau$.
Using \eqref{E3.1}, we get 
\begin{equation*}
   { }^{\ABC} \Dat \bigl[ \Phi_p(y(\tau))\bigr] = u(\tau)x(\tau) \ge0.
\end{equation*}
Applying the mean value theorem for the $\ABC$ derivative \cite[Theorem 2.2.]{Fernandez2018}, we find 
\begin{equation*}
   \Phi_p(y(t)) - \Phi_p(y(\tau)) = \Phi_p(y(t)) \ge0,
\end{equation*}
which contradicts the assumption $\Phi_p(y(t)) < 0$. Consequently, $\Phi_p(y(t))\ge0$, and thus $y\ge0$.
\end{proof}

We focus our attention on establishing that the 
SI model \eqref{E3.1}--\eqref{E3.2} has a unique solution.

\begin{theorem}[Characterization of the solution]\label{T3}
$\vartheta$ is a solution of \eqref{E3.1}--\eqref{E3.2} if and only if
\begin{equation}\label{E4.4}
\begin{split}
\vartheta(t) &= \Phi_{p^*}\Bigl[ \Phi_p(\vartheta_0) + { }^{\mathcal{AB}} I^{\alpha} \bigl(\Psi(\vartheta(t))\bigr) \Bigr]\\
&= \Phi_{p^*}\Bigl[ \Phi_p(\vartheta_0) + \frac{1-\alpha}{B(\alpha)}\Psi(\vartheta(t)) +\frac{\alpha}{B(\alpha)\Gamma(\alpha)}
\int_0^t \Psi(\vartheta(s)) (t-s)^{\alpha-1} \,ds\Bigr]
=: \mathcal{T}(\vartheta(t)).  
\end{split}
\end{equation}
\end{theorem}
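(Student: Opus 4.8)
The plan is to read \eqref{E4.4} as the result of inverting, one after the other, the two operators that act on $\vartheta$ in \eqref{E3.6}: first strip off ${}^{\ABC}\Dat$ using its (partial) inverse ${}^{\mathcal{AB}}I^\alpha$ from Lemma~\ref{L1}, then strip off $\Phi_p$ using its genuine inverse $\Phi_{p^*}$. I would argue the two implications separately.

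For \emph{necessity}, suppose $\vartheta=(x,y)$ solves \eqref{E3.1}--\eqref{E3.2}, equivalently \eqref{E3.6}. By Theorems~\ref{T1} and \ref{T2} the components $x,y$ are bounded and non-negative, so $\Phi_p(x),\Phi_p(y)$ are continuous (indeed absolutely continuous) on $\mathcal{I}_T$ and $\Psi(\vartheta(\cdot))$ is continuous — even Lipschitz by \eqref{E3.5} — which makes every integral below well defined and lets me apply Lemma~\ref{L1} componentwise to $\Phi_p(\vartheta)$. Applying ${}^{\mathcal{AB}}I^\alpha$ to each component of \eqref{E3.6} and using \eqref{E2.4} gives $\Phi_p(\vartheta(t))-\Phi_p(\vartheta(0))={}^{\mathcal{AB}}I^\alpha\!\bigl(\Psi(\vartheta(t))\bigr)$; since $\vartheta(0)=\vartheta_0$, this is $\Phi_p(\vartheta(t))=\Phi_p(\vartheta_0)+{}^{\mathcal{AB}}I^\alpha(\Psi(\vartheta(t)))$. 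Composing on the left with $\Phi_{p^*}=\Phi_p^{-1}$ (acting componentwise) yields the first line of \eqref{E4.4}, and substituting the definition \eqref{E2.3} of the $\mathcal{AB}$ integral produces the second line, i.e.\ $\vartheta(t)=\mathcal{T}(\vartheta(t))$.

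For \emph{sufficiency}, suppose $\vartheta=\mathcal{T}(\vartheta)$. Applying $\Phi_p$ to both sides and using $\Phi_p\circ\Phi_{p^*}=\mathrm{id}$ gives $\Phi_p(\vartheta(t))=\Phi_p(\vartheta_0)+{}^{\mathcal{AB}}I^\alpha(\Psi(\vartheta(t)))$. Now apply ${}^{\ABC}\Dat$: the constant vector $\Phi_p(\vartheta_0)$ is annihilated directly from \eqref{E2.1}, and applying ${}^{\ABC}\Dat$ to ${}^{\mathcal{AB}}I^\alpha(\Psi(\vartheta))$ returns $\Psi(\vartheta(t))$ by the companion inverse property of the Atangana--Baleanu operators, so the differential equation in \eqref{E3.6} is recovered. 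Evaluating $\vartheta=\mathcal{T}(\vartheta)$ at $t=0$, where the Riemann--Liouville part of ${}^{\mathcal{AB}}I^\alpha$ drops out, then returns the initial condition $\vartheta(0)=\vartheta_0$, completing the equivalence.

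The necessity direction is essentially bookkeeping with \eqref{E2.1}, \eqref{E2.3}, \eqref{E2.4} and the bijectivity of $\Phi_p$; the delicate point — and the one I would write out most carefully — is the closing step of sufficiency, namely pinning down exactly in what sense ${}^{\ABC}\Dat$ inverts ${}^{\mathcal{AB}}I^\alpha$ and making sure the initial datum is genuinely recovered. Since ${}^{\ABC}\Dat[\,\cdot\,](0)=0$ identically, the identity $\Phi_p(\vartheta(t))=\Phi_p(\vartheta_0)+{}^{\mathcal{AB}}I^\alpha(\Psi(\vartheta))(t)$ has to be handled with care at $t=0$; I expect this to be dealt with either by taking the fixed-point equation \eqref{E4.4} as the definition of a (mild) solution, or by invoking the precise composition rule for the $\mathcal{AB}$ operators together with the regularity and boundedness guaranteed by Theorems~\ref{T1}--\ref{T2}.
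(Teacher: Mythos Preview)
Your approach is essentially the same as the paper's: apply ${}^{\mathcal{AB}}I^\alpha$ to \eqref{E3.6} via Lemma~\ref{L1} to get $\Phi_p(\vartheta(t))=\Phi_p(\vartheta_0)+{}^{\mathcal{AB}}I^\alpha(\Psi(\vartheta(t)))$, then invert $\Phi_p$ by $\Phi_{p^*}$. The paper in fact writes only the necessity direction in two lines and does not spell out the converse or the $t=0$ issue, so your treatment is more careful than the original but follows the same route.
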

\begin{proof}
By applying the $\mathcal{AB}$ fractional integral to the reformulated problem \eqref{E3.6}, we obtain
\begin{equation*}
    \Phi_p(\vartheta(t)) 
     = \Phi_p(\vartheta_0) + \frac{1-\alpha}{B(\alpha)}\Psi(\vartheta(t)) 
     +\frac{\alpha}{B(\alpha)\Gamma(\alpha)}
    \int_0^t \Psi(\vartheta(s)) (t-s)^{\alpha-1} \,ds.
\end{equation*}
Consequently,
\begin{equation*}
     \vartheta(t) = \mathcal{T}(\vartheta(t)) 
     := \Phi_{p^*}\biggl[ \Phi_p(\vartheta_0) 
     + \frac{1-\alpha}{B(\alpha)}\Psi(\vartheta(t)) +\frac{\alpha}{B(\alpha)\Gamma(\alpha)}\int_0^t \Psi(\vartheta(s)) (t-s)^{\alpha-1}\,ds\biggr].
\end{equation*}
\end{proof}
To study the existence and uniqueness of the solution of the proposed problem with the fractional derivative $\ABC$, 
on the basis of the fixed point theorem and the equation \eqref{E4.4}, the iterative fixed point formula is given by
\begin{equation}\label{E4.5}
\begin{cases}
\vartheta^{n+1}(t) &= \Phi_{p^*}\Bigl[ \Phi_p(\vartheta_0)
 + \frac{1-\alpha}{B(\alpha)}\Psi(\vartheta^n(t)) +\frac{\alpha}{B(\alpha)\Gamma(\alpha)}
\int_0^t \Psi(\vartheta^n(s)) (t-s)^{\alpha-1} \,ds\Bigr],\\
\vartheta^0(t)&=\vartheta_0.
\end{cases} \quad t\in[0,T],
\end{equation}
The difference between successive iterations is considered as follows
\begin{equation}\label{E4.6}
  e^{n+1}(t) = \vartheta^{n+1}(t) - \vartheta^{n}(t)
  = \mathcal{T}(\vartheta^{n+1}(t)) - \mathcal{T}(\vartheta^{n}(t)),\quad n\ge0.
\end{equation}
This leads us to note that
\begin{equation*}
    \vartheta^{n}(t) = \sum_{k=0}^n e^k(t), \ \text{ with } \ e^0(t) = \vartheta^0(t).
\end{equation*}

\begin{lemma}\label{L4}
Assuming the Lipschitz condition \eqref{E3.5}, we have
\begin{equation*}
   \|e^{n}\|_{\mathcal{I}_T} 
   \le \biggl[\frac{(p^*-1)k^{p^*-2}\xi}{B(\alpha)}\Bigl(1-\alpha + \frac{t^{\alpha}}{\Gamma(\alpha)}\Bigr) \biggr]^n \|\vartheta^0\|_{\mathcal{I}_T}.
\end{equation*}
\end{lemma}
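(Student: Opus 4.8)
The plan is to prove the estimate by induction on $n$, using the Lipschitz structure of the fixed point operator $\mathcal{T}$ defined in \eqref{E4.4}. For the base case $n=0$ the asserted inequality is just the equality $\|e^{0}\|_{\mathcal{I}_T}=\|\vartheta^{0}\|_{\mathcal{I}_T}$, since $e^{0}=\vartheta^{0}$ and the bracketed factor raised to the power $0$ equals $1$; the first genuine step ($n=1$) is treated exactly like the general induction step after recording that $\vartheta^{0}=\Phi_{p^{*}}[\Phi_p(\vartheta_0)]$ and $\Psi(0)=0$. Assuming the bound for some index $n$, I would start from the iteration \eqref{E4.5}, write $\vartheta^{n+1}=\mathcal{T}(\vartheta^{n})$, and use \eqref{E4.6} to express
\begin{equation*}
e^{n+1}(t)=\mathcal{T}(\vartheta^{n}(t))-\mathcal{T}(\vartheta^{n-1}(t))
=\Phi_{p^{*}}\!\Bigl[\Phi_p(\vartheta_0)+{}^{\mathcal{AB}}I^{\alpha}\Psi(\vartheta^{n}(t))\Bigr]
-\Phi_{p^{*}}\!\Bigl[\Phi_p(\vartheta_0)+{}^{\mathcal{AB}}I^{\alpha}\Psi(\vartheta^{n-1}(t))\Bigr].
\end{equation*}

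The next step is to apply Lemma~\ref{L2} to $\Phi_{p^{*}}$. Since $p\ge 2$ forces $1<p^{*}\le 2$, and since by the boundedness result (Theorem~\ref{T1}) all the iterates $\vartheta^{n}$ remain in one fixed bounded set, the two arguments of $\Phi_{p^{*}}$ above lie, uniformly in $n$, in a region where Lemma~\ref{L2} provides the Lipschitz constant $(p^{*}-1)k^{p^{*}-2}$, whence
\begin{equation*}
|e^{n+1}(t)|\le (p^{*}-1)k^{p^{*}-2}\,\Bigl|{}^{\mathcal{AB}}I^{\alpha}\bigl(\Psi(\vartheta^{n}(t))-\Psi(\vartheta^{n-1}(t))\bigr)\Bigr|.
\end{equation*}
I would then expand the $\mathcal{AB}$ fractional integral using \eqref{E2.3}, apply the triangle inequality, invoke the Lipschitz condition \eqref{E3.5} in the form $\|\Psi(\vartheta^{n}(t))-\Psi(\vartheta^{n-1}(t))\|\le\xi\,\|e^{n}\|_{\mathcal{I}_T}$, and evaluate the Abel-type integral $\int_{0}^{t}(t-s)^{\alpha-1}\,ds=t^{\alpha}/\alpha$. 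After simplification (the two $\alpha$-factors cancel) this gives
\begin{equation*}
\|e^{n+1}(t)\|\le\frac{(p^{*}-1)k^{p^{*}-2}\xi}{B(\alpha)}\Bigl(1-\alpha+\frac{t^{\alpha}}{\Gamma(\alpha)}\Bigr)\|e^{n}\|_{\mathcal{I}_T},
\end{equation*}
and combining this with the inductive hypothesis (and taking the supremum over $t\in\mathcal{I}_T$, using that the bracket is nondecreasing in $t$) closes the induction.

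The main obstacle is precisely the step I glossed over: producing a single constant $k$, independent of the iteration index $n$, at which Lemma~\ref{L2} may legitimately be invoked for $\Phi_{p^{*}}$ evaluated at $\Phi_p(\vartheta_0)+{}^{\mathcal{AB}}I^{\alpha}\Psi(\vartheta^{n})$. Because $p^{*}-2\le 0$, controlling $|w|^{p^{*}-2}$ requires these arguments to be both bounded and bounded away from zero, uniformly in $n$, which is where Theorem~\ref{T1}, the continuity of $\Psi$ on bounded sets, and an inductive check that every $\vartheta^{n}$ stays in the same compact set are genuinely used. Once this uniform Lipschitz constant is in hand, the remaining ingredients — the expansion of ${}^{\mathcal{AB}}I^{\alpha}$, the triangle inequality, the evaluation of the Abel integral, and the induction itself — are routine.
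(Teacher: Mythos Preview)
Your proposal is correct and follows essentially the same approach as the paper: induction on $n$, applying Lemma~\ref{L2} to $\Phi_{p^{*}}$ to extract the factor $(p^{*}-1)k^{p^{*}-2}$, expanding the $\mathcal{AB}$ integral via \eqref{E2.3}, invoking the Lipschitz bound \eqref{E3.5}, and evaluating $\int_0^t(t-s)^{\alpha-1}\,ds=t^{\alpha}/\alpha$. The paper even introduces a fictitious $\vartheta^{-1}$ (implicitly zero, so that $\Psi(\vartheta^{-1})=0$) to handle $n=1$, exactly as you suggest. Your discussion of the constant $k$ is in fact more scrupulous than the paper's, which simply invokes Lemma~\ref{L2} without explicitly verifying the uniform upper/lower bounds on the arguments of $\Phi_{p^{*}}$; note, however, that Theorem~\ref{T1} concerns solutions rather than Picard iterates, so the uniform control of the $\vartheta^{n}$ really does require the separate inductive check you allude to.
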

\begin{proof}[Proof by induction]
Let $n\in\mathbb{N}$, for $n=0$ we get
\begin{equation*}
    \| e^0(t) \|_{\mathcal{I}_T} 
    = \| \vartheta^0(t) \|_{\mathcal{I}_T} \le \biggl[\frac{(p^*-1)k^{p^*-2}\xi}{B(\alpha)}\Bigl(1-\alpha + \frac{t^{\alpha}}{\Gamma(\alpha)}\Bigr) \biggr]^0 \| \vartheta^0(t) \|_{\mathcal{I}_T}.
\end{equation*}
For $n = 1$, by considering the equations \eqref{E4.5}, \eqref{E4.6}, Lemma~\ref{L2}, and the triangular inequality of norms, we find
\begin{equation*}
\begin{split}
\| e^{1}(t) \|_{\mathcal{I}_T} 
&= \| \vartheta^{1}(t) - \vartheta^0(t) \|_{\mathcal{I}_T} \\
&= \Bigl\| \Phi_{p^*}\Bigl[\Phi_p(\vartheta_0) + { }^{\mathcal{AB}} I^{\alpha} \bigl(\Psi(\vartheta^0(t))\bigr) \Bigr] 
- \Phi_{p^*}\Bigl[ \Phi_p(\vartheta_0) + { }^{\mathcal{AB}} I^{\alpha}    
    \bigl(\Psi(\vartheta^{-1}(t))\bigr) \Bigr]\Bigr\|_{\mathcal{I}_T}\\
&\le (p^*-1) k^{p^*-2} \Bigl\| { }^{\mathcal{AB}} I^{\alpha} 
\bigl(\Psi(\vartheta^0(t))\bigr) - { }^{\mathcal{AB}} I^{\alpha} 
\bigl(\Psi(\vartheta^{-1}(t))\bigr)\Bigr\|_{\mathcal{I}_T}\\
&\le (p^*-1)k^{p^*-2}\Big\| \frac{1-\alpha}{B(\alpha)}\bigl(\Psi(\vartheta^0(t)) - \Psi(\vartheta^{-1}(t))\bigr)\\
&\qquad + \frac{\alpha}{B(\alpha) \Gamma(\alpha)} 
   \int_0^t \bigl(\Psi(\vartheta^0(s)) - \Psi(\vartheta^{-1}(s))\bigr) (t - s)^{\alpha-1} \,ds\Big\|_{\mathcal{I}_T} \\ 
&\le (p^*-1)k^{p^*-2}\frac{1-\alpha}{B(\alpha)} 
    \bigl\|\Psi(\vartheta^0(t)) - \Psi(\vartheta^{-1}(t))\bigr\|_{\mathcal{I}_T}\\  
&\qquad + (p^*-1)k^{p^*-2}\frac{\alpha}{B(\alpha) \Gamma(\alpha)} 
\int_0^t \bigl\| \Psi(\vartheta^0(s)) - \Psi(\vartheta^{-1}(s))\bigr\|_{\mathcal{I}_T} (t - s)^{\alpha-1} \,ds.
\end{split}
\end{equation*}
Afterwards, 
\begin{equation*}
\begin{split}
\| e^{1}(t) \|_{\mathcal{I}_T} &\le (p^*-1)k^{p^*-2}\frac{1-\alpha}{B(\alpha)}\xi 
\| \vartheta^0(t) - \vartheta^{-1}(t) \|_{\mathcal{I}_T}\\
&\qquad + (p^*-1)k^{p^*-2} \frac{\alpha}{B(\alpha) \Gamma(\alpha)}\xi 
    \int_0^t \| \vartheta^0(s) - \vartheta^{-1}(s) \|_{\mathcal{I}_T} (t - s)^{\alpha-1} \,ds\\
&\le (p^*-1)k^{p^*-2} \frac{1-\alpha}{B(\alpha)}\xi 
\|e^0(t)\|_{\mathcal{I}_T} + (p^*-1)k^{p^*-2} \frac{\alpha}{B(\alpha) \Gamma(\alpha)}\xi \int_0^t \|e^0(s)\|_{\mathcal{I}_T} (t - s)^{\alpha-1} \,ds\\
&\le (p^*-1)k^{p^*-2}\frac{1-\alpha}{B(\alpha)}\xi  \|\vartheta^0\|_{\mathcal{I}_T} 
+ (p^*-1)k^{p^*-2} \frac{\alpha}{B(\alpha) \Gamma(\alpha)}\xi
\|\vartheta^0\|_{\mathcal{I}_T} \int_0^t (t - s)^{\alpha-1} \,ds\\
&\le (p^*-1)k^{p^*-2} \frac{1-\alpha}{B(\alpha)}\xi \|\vartheta^0\|_{\mathcal{I}_T} 
+ (p^*-1)k^{p^*-2} \frac{\alpha}{B(\alpha) \Gamma(\alpha)\alpha}\xi 
\|\vartheta^0\|_{\mathcal{I}_T} t^{\alpha}\\
&\le \biggl[\frac{(p^*-1)k^{p^*-2}\xi}{B(\alpha)}\Bigl(1-\alpha + \frac{t^{\alpha}}{\Gamma(\alpha)}\Bigr) \biggr]^1 \|\vartheta^0\|_{\mathcal{I}_T}.
\end{split}
\end{equation*}
We then assume that the property is true for order $n$, and show that it is true for $n+1$. Using the same methodology as before, we obtain
\begin{equation*}
\begin{split}
  \|e^{n+1}(t)\|_{\mathcal{I}_T} &= \| \vartheta^{n+1}(t) - \vartheta^{n}(t) \|_{\mathcal{I}_T} \\
  &\le (p^*-1)k^{p^*-2}\frac{1-\alpha}{B(\alpha)}\xi 
  \|e^n(t)\|_{\mathcal{I}_T} 
  + (p^*-1)k^{p^*-2}\frac{\alpha}{B(\alpha) \Gamma(\alpha)}\xi 
  \int_0^t \|e^n(s)\|_{\mathcal{I}_T} (t - s)^{\alpha-1} \,ds.
\end{split}
\end{equation*}
Thus, we have
\begin{equation*}
\begin{split}
\| e^{n+1}(t) \|_{\mathcal{I}_T} 
&\le (p^*-1)k^{p^*-2}\frac{1-\alpha}{B(\alpha)}\xi 
\Biggl( \biggl[\frac{(p^*-1)k^{p^*-2}\xi}{B(\alpha)}
\Bigl(1-\alpha + \frac{t^{\alpha}}{\Gamma(\alpha)}\Bigr) \biggr]^n \|\vartheta^0\|_{\mathcal{I}_T} \Biggr)\\
&\qquad + \frac{(p^*-1)k^{p^*-2}\alpha}{B(\alpha) \Gamma(\alpha)}\xi 
\int_0^t \Biggl( \biggl[\frac{(p^*-1)k^{p^*-2}\xi}{B(\alpha)}
\Bigl(1-\alpha + \frac{t^{\alpha}}{\Gamma(\alpha)}\Bigr) \biggr]^n \|\vartheta^0\|_{\mathcal{I}_T} \Biggr) (t-s)^{\alpha-1}\,ds\\
& \le \biggl[\frac{(p^*-1)k^{p^*-2}\xi}{B(\alpha)}\Bigl(1-\alpha + \frac{t^{\alpha}}{\Gamma(\alpha)}\Bigr) \biggr]^n 
\|\vartheta^0\|_{\mathcal{I}_T} 
\bigg( (p^*-1)k^{p^*-2}\frac{1-\alpha}{B(\alpha)}\xi\\
&\qquad + (p^*-1)k^{p^*-2}\frac{\alpha}{B(\alpha) \Gamma(\alpha)}\xi \int_0^t (t - s)^{\alpha-1} \,ds \bigg)\\
& \le \biggl[\frac{(p^*-1)k^{p^*-2}\xi}{B(\alpha)}\Bigl(1-\alpha + \frac{t^{\alpha}}{\Gamma(\alpha)}\Bigr) \biggr]^{n+1} \|\vartheta^0\|_{\mathcal{I}_T}.
\end{split}
\end{equation*}
\end{proof}

\begin{theorem}[Uniqueness]\label{T4}
If there is a $t_0$ which satisfies
\begin{equation}\label{E4.7}
    \frac{(p^*-1)k^{p^*-2}\xi}{B(\alpha)}\Bigl(1-\alpha + \frac{t_0^{\alpha}}{\Gamma(\alpha)}\Bigr) < 1,
\end{equation}
then the problem \eqref{E3.6} admits one and only one solution.
\end{theorem}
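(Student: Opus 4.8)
The plan is to read \eqref{E4.4} as a fixed-point equation $\vartheta=\mathcal{T}(\vartheta)$ in the Banach space $\mathcal{C}(\mathcal{I}_{t_0},\mathbb{R}^2)$ (with $\mathcal{I}_{t_0}:=[0,t_0]$) and to invoke the Banach contraction principle, the contraction constant being exactly the quantity appearing in \eqref{E4.7}. By Theorem~\ref{T3}, solving \eqref{E3.6} is equivalent to finding a fixed point of $\mathcal{T}$, and by Theorems~\ref{T1}--\ref{T2} every solution (and, one checks, every iterate $\vartheta^n$) stays in a fixed region, so that the arguments of $\Phi_p$ and $\Phi_{p^*}$ remain in a set on which Lemma~\ref{L2} yields a Lipschitz estimate with a uniform constant $k$.

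\emph{Existence.} Set $\kappa:=\frac{(p^*-1)k^{p^*-2}\xi}{B(\alpha)}\bigl(1-\alpha+\frac{t_0^{\alpha}}{\Gamma(\alpha)}\bigr)$, which is $<1$ by hypothesis \eqref{E4.7}. Lemma~\ref{L4} gives $\|e^n\|_{\mathcal{I}_{t_0}}\le\kappa^n\|\vartheta^0\|_{\mathcal{I}_{t_0}}$, so the telescoping series $\sum_{k\ge0}e^k$ converges absolutely and uniformly on $\mathcal{I}_{t_0}$; hence the partial sums $\vartheta^n=\sum_{k=0}^n e^k$ form a Cauchy sequence and converge to some $\vartheta^\star\in\mathcal{C}(\mathcal{I}_{t_0},\mathbb{R}^2)$. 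Passing to the limit in the iteration \eqref{E4.5} — legitimate because $\mathcal{T}$ is continuous, which follows from the Lipschitz bound \eqref{E3.5} on $\Psi$, the local Lipschitz estimate of Lemma~\ref{L2} on $\Phi_{p^*}$, and the continuity of the $\mathcal{AB}$ integral — yields $\vartheta^\star=\mathcal{T}(\vartheta^\star)$, i.e.\ $\vartheta^\star$ solves \eqref{E3.6} on $\mathcal{I}_{t_0}$.

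\emph{Uniqueness.} If $\tilde\vartheta$ and $\bar\vartheta$ both solve \eqref{E3.6}, then by Theorem~\ref{T3} each equals its own image under $\mathcal{T}$. Estimating exactly as in the proof of Lemma~\ref{L4} (apply Lemma~\ref{L2} to $\Phi_{p^*}$, then \eqref{E3.5}, then split the $\mathcal{AB}$ integral into its two parts and bound $\int_0^t(t-s)^{\alpha-1}\,ds=\frac{t^\alpha}{\alpha}\le\frac{t_0^\alpha}{\alpha}$) gives
\begin{equation*}
\|\tilde\vartheta-\bar\vartheta\|_{\mathcal{I}_{t_0}}=\|\mathcal{T}(\tilde\vartheta)-\mathcal{T}(\bar\vartheta)\|_{\mathcal{I}_{t_0}}\le\kappa\,\|\tilde\vartheta-\bar\vartheta\|_{\mathcal{I}_{t_0}},
\end{equation*}
and since $\kappa<1$ this forces $\tilde\vartheta=\bar\vartheta$ on $\mathcal{I}_{t_0}$. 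A standard continuation argument — covering $[0,T]$ by finitely many subintervals of length at most $t_0$ and repeating the contraction on each, using the terminal value of one step as the initial datum of the next — then extends both existence and uniqueness to all of $\mathcal{I}_T$.

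\emph{Main obstacle.} The delicate point is the one already visible in Lemma~\ref{L3}: the behaviour of $\Phi_{p^*}$ near $0$. Since $p\ge2$ forces $p^*\in(1,2]$, the map $\Phi_{p^*}$ fails to be Lipschitz in any neighbourhood of the origin, so Lemma~\ref{L2} can only be applied through part~(ii), which demands a strictly positive lower bound $k$ on the relevant components. Justifying such a uniform bound along the iterates and the solutions — where the positivity statement of Theorem~\ref{T2} gives only non-negativity — and checking that $k$ can be fixed once and for all (independently of $n$ and of the subinterval in the continuation step) is the part that requires genuine care; the remaining estimates are routine repetitions of the computation carried out for Lemma~\ref{L4}.
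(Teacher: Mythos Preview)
Your approach is essentially the paper's: both build the solution as the limit of the Picard iterates $\vartheta^n$ via the geometric bound of Lemma~\ref{L4}, and both obtain uniqueness from the same contraction estimate on $\mathcal{T}$. You are in fact more careful than the paper on two points: you work first on $\mathcal{I}_{t_0}$ and then continue (the paper simply evaluates the bound ``at $t=t_0$'' and never explains how this yields a solution on all of $\mathcal{I}_T$), and you honestly flag the $\Phi_{p^*}$-near-zero obstacle, which the paper passes over in silence while still writing the constant $(p^*-1)k^{p^*-2}$. One caveat on your continuation step: since the $\ABC$ derivative has base point $0$ and the fixed-point map carries the full history $\int_0^t(\cdot)(t-s)^{\alpha-1}\,ds$, the ``restart on the next subinterval'' is not quite the standard ODE move---you must split the integral at the previous endpoint and treat the past piece as a known forcing term---but once this is done the remaining integral over the new subinterval contributes at most $t_0^{\alpha}/\alpha$ and the same contraction constant $\kappa$ reappears, so the argument does go through.
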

\begin{proof}
Recall from \eqref{E4.6} that $\vartheta^{n} = \sum_{j=0}^n e^j$. Then we have,
\begin{equation*}
   \| \vartheta^{n}\|_{\mathcal{I}_T} 
   \le \| \vartheta^0\|_{\mathcal{I}_T} \sum_{j=0}^n \biggl[ \frac{(p^*-1)k^{p^*-2}\xi}{B(\alpha)}\Bigl(1-\alpha + \frac{t^{\alpha}}{\Gamma(\alpha)}\Bigr) \biggr]^{j} .
\end{equation*}
For $t = t_0$, the above relation reads
\begin{equation}\label{E4.8}
    \| \vartheta^{n}\|_{\mathcal{I}_T} \le \| \vartheta^0\|_{\mathcal{I}_T} \sum_{j=0}^n \biggl[ \frac{(p^*-1)k^{p^*-2}\xi}{B(\alpha)}
    \Bigl(1-\alpha + \frac{t_0^{\alpha}}{\Gamma(\alpha)}\Bigr) \biggr]^{j} .
\end{equation}
Since we have 
\begin{equation*}
\frac{(p^*-1)k^{p^*-2}\xi}{B(\alpha)}\Bigl(1-\alpha + \frac{t_0^{\alpha}}{\Gamma(\alpha)}\Bigr) < 1,
\end{equation*}
the geometric series 
\begin{equation*}
     \sum_{j=0}^n \biggl[ \frac{(p^*-1)k^{p^*-2}\xi}{B(\alpha)}\Bigl(1-\alpha + \frac{t_0^{\alpha}}{\Gamma(\alpha)}\Bigr) \biggr]^j
\end{equation*}
is convergent. 
Consequently, the series $(\vartheta^n)$ exists and is bounded for any $n\in\mathbb{N}$, and we have
\begin{equation*}
    \lim_{n\to\infty}\| \vartheta^{n}\|_{\mathcal{I}_T} < \infty .
\end{equation*}
We also consider the relation
\begin{equation*}
    \mathcal{R}^n = \vartheta - \vartheta^n.
\end{equation*}
Using the same technique as in the proof of Lemma~\ref{L3}, we can show the following inequality
\begin{equation*}
\begin{split}
    \| \mathcal{R}^n \|_{\mathcal{I}_T} 
    &\le \frac{(p^*-1)k^{p^*-2}\xi}{B(\alpha)}
    \Bigl(1-\alpha + \frac{t^{\alpha}}{\Gamma(\alpha)}\Bigr)
    \|\mathcal{R}^{n-1} \|_{\mathcal{I}_T}\\
&\le \biggl[\frac{(p^*-1)k^{p^*-2}\xi}{B(\alpha)}
    \Bigl(1-\alpha + \frac{t^{\alpha}}{\Gamma(\alpha)}\Bigr) \biggr]^n \|
    \mathcal{R}^0 \|_{\mathcal{I}_T}.
\end{split}
\end{equation*}
Now, taking the limit $n\to\infty$ in the above relation we find 
\begin{equation*}
    \lim_{n\to\infty}\| \vartheta - \vartheta^{n}\|_{\mathcal{I}_T} = 0, \ \text{ i.e. } \ \lim_{n\to\infty}\vartheta^{n} = \vartheta.
\end{equation*}
Therefore, the existence of a solution is proven.

Now we need to prove uniqueness. 
Let us assume that there are two solutions to the reformulated system \eqref{E3.6}, namely by $\tilde{\vartheta}$ and $\bar{\vartheta}$, then
\begin{equation*}
\begin{split}
  \| \tilde{\vartheta}(t) - \bar{\vartheta}(t) \|_{\mathcal{I}_T} 
  &\le (p^*-1)k^{p^*-2}\Big\| \frac{1-\alpha}{B(\alpha)}
  \bigl(\Psi(\tilde{\vartheta}(t)) - \Psi(\bar{\vartheta}(t))\bigr)\\
   &\qquad + \frac{\alpha}{B(\alpha) \Gamma(\alpha)} \int_0^t \bigl(\Psi(\tilde{\vartheta}(s)) - \Psi(\bar{\vartheta}(s))\bigr) (t - s)^{\alpha-1} \,ds\Big\|_{\mathcal{I}_T} \\ 
   &\le (p^*-1)k^{p^*-2}\frac{1-\alpha}{B(\alpha)} 
\bigl\|\Psi(\tilde{\vartheta}(t)) - \Psi(\bar{\vartheta}(t))\bigr\|_{\mathcal{I}_T}\\  
  &\qquad + (p^*-1)k^{p^*-2}\frac{\alpha}{B(\alpha) \Gamma(\alpha)} 
\int_0^t \bigl\| \Psi(\tilde{\vartheta}(s)) - \Psi(\bar{\vartheta}(s))\bigr\|_{\mathcal{I}_T} (t - s)^{\alpha-1} \,ds\\
   &\le (p^*-1)k^{p^*-2}\frac{1-\alpha}{B(\alpha)}\xi 
\bigl\|\tilde{\vartheta}(t) - \bar{\vartheta}(t)\bigr\|_{\mathcal{I}_T}\\
  &\qquad + (p^*-1)k^{p^*-2}\frac{\alpha}{B(\alpha) \Gamma(\alpha)}\xi 
\int_0^t \bigl\| \tilde{\vartheta}(s) - \bar{\vartheta}(s)\bigr\|_{\mathcal{I}_T} (t - s)^{\alpha-1} \,ds\\
  &\le (p^*-1)k^{p^*-2}\frac{1-\alpha}{B(\alpha)}\xi \|\tilde{\vartheta} - \bar{\vartheta}\|_{\mathcal{I}_T} + (p^*-1)k^{p^*-2}\frac{t^{\alpha}}{B(\alpha) \Gamma(\alpha)}\xi \|\tilde{\vartheta} - \bar{\vartheta}\|_{\mathcal{I}_T} .
\end{split}
\end{equation*}
This estimate shows that
\begin{equation*}
    \| \tilde{\vartheta} - \bar{\vartheta} \|_{\mathcal{I}_T} \le 0,\quad
    \text{i.e.}\quad
    \tilde{\vartheta} = \bar{\vartheta}.
\end{equation*}
\end{proof}

\section{Stability analysis}\label{S5}
In this section we give an analysis of the Hyers-Ulam stability for the reformulated problem \eqref{E3.6}.
The study of stability of functional equations originated from an open question in 1964 (cf.\ \cite{Forti1995, Ulam1964}), which focused on the stability of a group homomorphism. 
Given a group $\mathcal{G}$ and a metric group $(\mathcal{G}^\prime, d)$. 
Let $\varepsilon > 0$, is there a $a > 0$ such that, if $f\colon\mathcal{G} \to \mathcal{G}^\prime$ satisfies the condition 
\begin{equation*}
   d\bigl(f(xy), f(x)f(y)\bigr) < a, \quad \text{for all } x, y \in \mathcal{G},
\end{equation*}
then there exists a homomorphism $g\colon\mathcal{G} \to \mathcal{G}^\prime$ such that 
\begin{equation*}
   d\bigl(f(x), g(x)\bigr) < \varepsilon, \quad\text{for all } x \in \mathcal{G}.
\end{equation*}
Essentially, if we have an almost homomorphism, then there is a very near homomorphism with a small error.
Next, we define the stability notion mentioned above and prove the stability of the reformulated system \eqref{E3.6}.

\begin{definition}[Hyers-Ulam stability]\label{D4}
The $\ABC$-FDE problem \eqref{E3.6} is Hyers-Ulam stable if, for any given $\mathscr{S} > 0$,
there exists a constant $\delta > 0$ such that if
\begin{equation}\label{E5.1}
    \biggl| \vartheta_i(t) - \Phi_{p^*}\Bigl[ \Phi_p(\vartheta_0^i) + \frac{1-\alpha}{B(\alpha)}\Psi(\vartheta_i(t)) +\frac{\alpha}{B(\alpha)\Gamma(\alpha)}
    \int_0^t \Psi(\vartheta_i(s)) (t-s)^{\alpha-1} \,ds\Bigr] \biggr| \le \mathscr{S},
\end{equation}
there exists $\varphi(t) = (\varphi_1(t),\varphi_2(t))$ satisfying
\begin{equation}\label{E5.2}
   \varphi_i(t) = \Phi_{p^*}\Bigl[ \Phi_p(\vartheta_0^i) +
   \frac{1-\alpha}{B(\alpha)}\Psi(\varphi_i(t)) +\frac{\alpha}{B(\alpha)\Gamma(\alpha)}
   \int_0^t \Psi(\varphi_i(s)) (t-s)^{\alpha-1} \,ds\Bigr],
\end{equation}
such that
\begin{equation*}
    |\vartheta_i(t) - \varphi_i(t)| \le \delta \mathscr{S}.
\end{equation*}
\end{definition}

The following theorem shows the stability of the problem in the sense defined above.
\begin{theorem}\label{T5}
The reformulated problem \eqref{E3.6} is Hyers-Ulam stable.
\end{theorem}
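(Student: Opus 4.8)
The plan is to exploit the fixed-point reformulation $\vartheta=\mathcal{T}(\vartheta)$ of Theorem~\ref{T3} together with the contraction estimate already established in Lemma~\ref{L4} and used in Theorem~\ref{T4}. First, by Theorem~\ref{T4} the problem \eqref{E3.6} admits a unique solution $\varphi=(\varphi_1,\varphi_2)$, which by \eqref{E4.4} satisfies precisely the fixed-point identity \eqref{E5.2}; this $\varphi$ will be the function required in Definition~\ref{D4}. Now fix $\mathscr{S}>0$ and let $\vartheta=(\vartheta_1,\vartheta_2)$ satisfy the perturbed inequality \eqref{E5.1}, i.e.\ $|\vartheta_i(t)-\mathcal{T}(\vartheta_i(t))|\le\mathscr{S}$ for all $t\in\mathcal{I}_T$.

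Since $\varphi_i=\mathcal{T}(\varphi_i)$, inserting $\mathcal{T}(\vartheta_i(t))$ and using the triangle inequality gives
\begin{equation*}
 |\vartheta_i(t)-\varphi_i(t)| \le |\vartheta_i(t)-\mathcal{T}(\vartheta_i(t))| + |\mathcal{T}(\vartheta_i(t))-\mathcal{T}(\varphi_i(t))| \le \mathscr{S} + |\mathcal{T}(\vartheta_i(t))-\mathcal{T}(\varphi_i(t))|.
\end{equation*}
For the last term I would argue exactly as in the proof of Theorem~\ref{T4}: all solutions are bounded by Theorem~\ref{T1}, so the arguments of $\Phi_{p^*}$ appearing in \eqref{E4.4} remain in a ball of some fixed radius $k$, whence Lemma~\ref{L2}(i) applies and, after expanding the $\mathcal{AB}$ integral \eqref{E2.3}, using the Lipschitz bound \eqref{E3.5} on $\Psi$, and estimating $\int_0^t(t-s)^{\alpha-1}\,ds=t^{\alpha}/\alpha\le T^{\alpha}/\alpha$, one obtains
\begin{equation*}
 |\mathcal{T}(\vartheta_i(t))-\mathcal{T}(\varphi_i(t))| \le \Theta\,\|\vartheta_i-\varphi_i\|_{\mathcal{I}_T}, \qquad \Theta := \frac{(p^*-1)k^{p^*-2}\xi}{B(\alpha)}\Bigl(1-\alpha+\frac{T^{\alpha}}{\Gamma(\alpha)}\Bigr).
\end{equation*}

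Taking the supremum over $t\in\mathcal{I}_T$ in the preceding chain of inequalities yields $\|\vartheta_i-\varphi_i\|_{\mathcal{I}_T}\le\mathscr{S}+\Theta\,\|\vartheta_i-\varphi_i\|_{\mathcal{I}_T}$, so that
\begin{equation*}
 \|\vartheta_i-\varphi_i\|_{\mathcal{I}_T} \le \frac{1}{1-\Theta}\,\mathscr{S} =: \delta\mathscr{S},
\end{equation*}
which is exactly the Hyers-Ulam estimate of Definition~\ref{D4} with $\delta=1/(1-\Theta)$. The only genuine point to secure — as opposed to the routine estimates, which merely repeat Lemma~\ref{L4} — is that the argument needs the contraction constant $\Theta<1$; this is precisely condition \eqref{E4.7} read at $t_0=T$, which is the regime in which Theorem~\ref{T4} already places us, and it simultaneously fixes an admissible value of $k$ through the a priori bound of Theorem~\ref{T1}.
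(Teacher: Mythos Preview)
Your argument is correct and uses the same core ingredients as the paper's proof---Lemma~\ref{L2} applied to $\Phi_{p^*}$, the Lipschitz bound \eqref{E3.5} on $\Psi$, and the elementary estimate $\int_0^t(t-s)^{\alpha-1}\,ds=t^\alpha/\alpha$---but the way you close the estimate is different and, in fact, more transparent. The paper writes $|\vartheta_i-\varphi_i|$ directly as $|\mathcal{T}(\vartheta_i)-\mathcal{T}(\varphi_i)|$ (tacitly treating \emph{both} functions as fixed points), arrives at the same intermediate bound $\Theta\,\|\vartheta_i-\varphi_i\|_{\mathcal{I}_T}$, and then concludes $\le\delta\mathscr{S}$ with $\delta=\Theta$ itself; the perturbation hypothesis \eqref{E5.1} never visibly enters the chain. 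Your triangle-inequality decomposition $|\vartheta_i-\varphi_i|\le\mathscr{S}+|\mathcal{T}(\vartheta_i)-\mathcal{T}(\varphi_i)|$ is the standard Hyers--Ulam closure: it makes explicit where $\mathscr{S}$ comes in and produces $\delta=1/(1-\Theta)$, at the cost of requiring the contraction condition \eqref{E4.7} at $t_0=T$---a hypothesis the paper's version formally sidesteps but which you correctly identify as the natural regime already imposed by Theorem~\ref{T4}.
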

\begin{proof}
Let $\vartheta(t) = (\vartheta_1(t),\vartheta_2(t)) = (x(t), y(t))$ be a solution of \eqref{E3.6} 
and $\varphi(t) = (\varphi_1(t),\varphi_2(t))$ be an approximate solution and satisfying \eqref{E4.4}. 
Then, for $i = 1, 2,$ we have
\begin{equation*}
\begin{split}
 \bigl|\vartheta_i(t) - \varphi_i(t)\bigr| 
 &= \bigg| \Phi_{p^*}\Bigl[ \Phi_p(\vartheta_0^i) + \frac{1-\alpha}{B(\alpha)}\Psi(\vartheta_i(t)) 
 +\frac{\alpha}{B(\alpha)\Gamma(\alpha)}
 \int_0^t  \Psi(\vartheta_i(s)) (t-s)^{\alpha-1} \,ds\Bigr]\\
&\qquad - \Phi_{p^*} \Bigl[ \Phi_p(\vartheta_0^i) + \frac{1-\alpha}{B(\alpha)}\Psi(\varphi_i(t)) 
+\frac{\alpha}{B(\alpha)\Gamma(\alpha)}
\int_0^t \Psi(\varphi_i(s)) (t-s)^{\alpha-1} \,ds\Bigr]\bigg|.
\end{split}
\end{equation*}
Using Lemma~\ref{L2} leads to
\begin{equation*}
\begin{split}
\bigl|\vartheta_i(t) - \varphi_i(t)\bigr| 
&\le (p-1) k^{p^*-2} \Big|\frac{1-\alpha}{B(\alpha)}\Psi(\vartheta_i(t))
+\frac{\alpha}{B(\alpha)\Gamma(\alpha)}\int_0^t \Psi(\vartheta_i(s)) (t-s)^{\alpha-1} \,ds \\
&\qquad - \frac{1-\alpha}{B(\alpha)}\Psi(\varphi_i(t)) - \frac{\alpha}{B(\alpha)\Gamma(\alpha)}\int_0^t \Psi(\varphi_i(s)) (t-s)^{\alpha-1} \,ds \Big|\\
&\le (p-1) k^{p^*-2} \frac{1-\alpha}{B(\alpha)} \bigl|\Psi(\vartheta_i(t)) - \Psi(\varphi_i(t))\bigr|\\
&\qquad + (p-1) k^{p^*-2}\frac{\alpha}{B(\alpha)\Gamma(\alpha)}
\int_0^t \bigl|\Psi(\vartheta_i(s)) - \Psi(\varphi_i(s))\bigr| (t-s)^{\alpha-1} \,ds.
\end{split}
\end{equation*}
Afterwards,
\begin{equation*}
\begin{split}
\bigl|\vartheta_i(t) - \varphi_i(t)\bigr| 
&\le (p-1) k^{p^*-2}\xi \frac{1-\alpha}{B(\alpha)} 
  \bigl\|\vartheta_i - \varphi_i\bigr\|_{\mathcal{I}_T}\\
&\qquad + (p-1) k^{p^*-2} \xi \frac{\alpha}{B(\alpha)\Gamma(\alpha)}
\int_0^t \bigl\|\vartheta_i - \varphi_i\bigr\|_{\mathcal{I}_T} (t-s)^{\alpha-1} \,ds.
\end{split}
\end{equation*}
Then,
\begin{equation*}
\bigl|\vartheta_i(t) - \varphi_i(t)\bigr| 
\le \Bigl[(p-1) k^{p^*-2}\xi \frac{1-\alpha}{B(\alpha)} + (p-1) k^{p^*-2} \xi \frac{t^\alpha}{B(\alpha)\Gamma(\alpha)} \Bigr] \varepsilon 
\le \delta \mathscr{S},
\end{equation*}
where $\delta = \frac{(p-1) k^{p^*-2}}{B(\alpha)}\Bigl[1-\alpha +  \frac{T^\alpha}{\Gamma(\alpha)} \Bigr] \xi$.
\end{proof}

After having shown the stability of the problem, we now want to analyze the optimal control.

\section{Optimal control analysis}\label{S6}
In this section, we will discuss the optimality analysis for the SI model \eqref{E3.1}--\eqref{E3.2}. 
We define our objective functional as follows
\begin{equation}\label{E6.1}
    \mathscr{T}(u) = \int_0^T \Bigl( \mathscr{G} I^2(t) + \frac{\varrho}{2} u^2(t)\Bigr)\,dt,
\end{equation}
where $\varrho$ is a weight constant for the vaccination rate $u\in \mathcal{U}_{ad}$, 
while $\mathscr{G}$ are the proportional weights assigned to $S$ and $I$, respectively.
In the following, we will first show the existence of an optimal solution and then investigate the optimality conditions.

\subsection{Existence of an optimal solution}\label{SubS6.1}
Using minimizing sequences, we prove the existence of optimal control. 
The proof is based on the following lemma, which we introduce below.

\begin{lemma}[\protect{\cite[Proposition 2.1/Corollary 2.1]{Djida2018}}]\label{L5}
Let $\nu, \varphi \in C^{\infty}(\mathcal{I}_T)$. Then
\begin{equation}\label{E6.2}
\begin{split}
\int_0^T ({ }^{\ABC} \Dat \nu) \ \varphi \,dt 
&= -\int_0^T \nu \ ({ }^{\ABC}_T \Dat \varphi)\,dt - \frac{B(\alpha)}{1-\alpha} \nu(0) \int_0^T \varphi(t) E_{\alpha, \alpha}[-\gamma t^{\alpha}]\,dt\\
&\quad + \frac{B(\alpha)}{1-\alpha} \varphi(T) 
  \int_0^T \nu(t) E_{\alpha, \alpha}[-\gamma (T-t)^{\alpha}] \,dt.
\end{split}
\end{equation}
\end{lemma}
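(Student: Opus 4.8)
The plan is to reduce \eqref{E6.2} to a classical integration by parts by unfolding the integral definitions of the two operators in Definition~\ref{D2} and then exchanging the order of integration. Inserting \eqref{E2.1} into the left-hand side gives
\begin{equation*}
  \int_0^T \bigl({ }^{\ABC}\Dat\nu\bigr)(t)\,\varphi(t)\,dt
  = \frac{B(\alpha)}{1-\alpha}\int_0^T \varphi(t)\int_0^t \nu^{\prime}(y)\,E_\alpha\bigl[-\gamma (t-y)^{\alpha}\bigr]\,dy\,dt .
\end{equation*}
Since $\nu,\varphi\in C^{\infty}(\mathcal{I}_T)$, the integrand is continuous, hence bounded, on the triangle $\{(y,t):0\le y\le t\le T\}$, so Fubini's theorem applies and the iterated integral equals $\frac{B(\alpha)}{1-\alpha}\int_0^T \nu^{\prime}(y)\,G(y)\,dy$, where $G(y):=\int_y^T \varphi(t)\,E_\alpha[-\gamma (t-y)^{\alpha}]\,dt$.

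The next step is an ordinary integration by parts in $y$: because $\nu^{\prime}$ is a genuine derivative, $\int_0^T \nu^{\prime}(y)G(y)\,dy=\bigl[\nu(y)G(y)\bigr]_{y=0}^{y=T}-\int_0^T \nu(y)\,G^{\prime}(y)\,dy$. Here $G(T)=0$, while $G(0)=\int_0^T \varphi(t)E_\alpha[-\gamma t^{\alpha}]\,dt$, which is the source of the term carrying $\nu(0)$. To handle $G^{\prime}(y)$ I would differentiate under the integral sign — most transparently after the substitution $t=y+s$, which moves the $y$-dependence out of the integration limits — so that the Leibniz boundary contribution at $s=T-y$ produces the term carrying $\varphi(T)$, while the leftover integral $\int_y^T \varphi^{\prime}(t)E_\alpha[-\gamma (t-y)^{\alpha}]\,dt$ is, by \eqref{E2.2}, exactly ${ }^{\ABC}_T\Dat\varphi(y)$ up to the factor $\tfrac{1-\alpha}{B(\alpha)}$. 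Collecting the boundary term together with the two pieces of $-\int_0^T\nu G^{\prime}$ yields the right-hand side of \eqref{E6.2}; at the point where the statement displays the index-$(\alpha,\alpha)$ Mittag-Leffler function one invokes the elementary identity $\frac{d}{dt}E_\alpha(\lambda t^{\alpha})=\lambda t^{\alpha-1}E_{\alpha,\alpha}(\lambda t^{\alpha})$ to put the kernels in the desired form.

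The routine ingredients — the Fubini exchange and the differentiation under the integral — are immediate from the $C^{\infty}$ hypothesis on $\nu$ and $\varphi$; the one place that really needs care is the bookkeeping of the two boundary contributions at $y=0$ and $y=T$, together with the sign conventions in \eqref{E2.1}--\eqref{E2.2}, which is where an error is easiest to commit. I would double-check the resulting identity against the limit $\alpha\to1$, in which $B(\alpha)\to1$, ${ }^{\ABC}\Dat\nu\to\nu^{\prime}$, the kernel $\frac{B(\alpha)}{1-\alpha}E_\alpha[-\gamma t^{\alpha}]$ behaves as an approximate identity concentrated at $t=0$, and \eqref{E6.2} must collapse to the classical formula $\int_0^T \nu^{\prime}\varphi\,dt=[\nu\varphi]_0^T-\int_0^T \nu\varphi^{\prime}\,dt$; this pins down all the signs and confirms the placement of the $\nu(0)$ and $\varphi(T)$ boundary terms.
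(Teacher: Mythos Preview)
The paper does not prove Lemma~\ref{L5}; it simply quotes it from \cite{Djida2018}. Your Fubini-then-integrate-by-parts scheme is exactly the standard route to such fractional integration-by-parts identities and is the right way to go.

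That said, there is a genuine mismatch you are glossing over. If you carry out your own steps with the paper's Definition~\ref{D2}, the boundary term at $y=0$ is $-\frac{B(\alpha)}{1-\alpha}\nu(0)\int_0^T\varphi(t)\,E_{\alpha}[-\gamma t^{\alpha}]\,dt$, the Leibniz contribution at $s=T-y$ gives $+\frac{B(\alpha)}{1-\alpha}\varphi(T)\int_0^T\nu(y)\,E_{\alpha}[-\gamma (T-y)^{\alpha}]\,dy$, and the remaining inner integral $\int_y^T\varphi'(t)E_\alpha[-\gamma(t-y)^\alpha]\,dt$ equals $-\frac{1-\alpha}{B(\alpha)}\,{ }^{\ABC}_T\Dat\varphi(y)$ because of the minus sign already built into \eqref{E2.2}. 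So the computation produces $+\int_0^T\nu\,({ }^{\ABC}_T\Dat\varphi)\,dt$ and one-parameter kernels $E_\alpha$, not the $-\int_0^T\nu\,({ }^{\ABC}_T\Dat\varphi)\,dt$ and $E_{\alpha,\alpha}$ printed in \eqref{E6.2}. Your appeal to $\frac{d}{dt}E_\alpha(\lambda t^{\alpha})=\lambda t^{\alpha-1}E_{\alpha,\alpha}(\lambda t^{\alpha})$ does not help here: that identity would introduce an extra factor $t^{\alpha-1}$ (or $(T-t)^{\alpha-1}$) which is absent from the stated boundary terms, so it cannot ``put the kernels in the desired form.''

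The most plausible reading is that \eqref{E6.2} is transcribed with a different sign convention for the backward operator (and possibly with $E_{\alpha,\alpha}$ in place of $E_\alpha$). Your own sanity check actually flags the sign issue: as $\alpha\to1$ the backward operator of Definition~\ref{D2}(b) tends to $-\varphi'$, so the printed right-hand side would limit to $+\int_0^T\nu\varphi'\,dt+[\nu\varphi]_0^T$, contradicting the classical formula. So your method is sound; just be explicit that what it yields differs from \eqref{E6.2} as stated, and trace the discrepancy to the conventions rather than hand-waving it away.
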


\begin{theorem}\label{T6}
The reformulated system \eqref{E3.6} admits an optimal solution $\vartheta^* = \vartheta(u^*)\in L^\infty(\mathcal{I}_T)$ that minimizes the objective functional \eqref{E6.1}.
\end{theorem}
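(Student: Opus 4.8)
The plan is to use the direct method of the calculus of variations, following the standard Filippov--Cesari type argument adapted to the $\ABC$ setting. First I would introduce a minimizing sequence: since $\mathscr{T}(u)\ge 0$ for all $u\in\mathcal{U}_{ad}$, the infimum $\mathscr{T}^*:=\inf_{u\in\mathcal{U}_{ad}}\mathscr{T}(u)$ is finite and nonnegative, so there is a sequence $(u^n)\subset\mathcal{U}_{ad}$ with $\mathscr{T}(u^n)\to\mathscr{T}^*$. Let $\vartheta^n=(S^n,I^n)=\vartheta(u^n)$ denote the corresponding state trajectories, which exist and are unique by Theorem~\ref{T4} and are nonnegative and bounded uniformly in $n$ by Theorems~\ref{T1} and \ref{T2} (the bounds there depend only on the data $N_0,\mu,\eta,\alpha,p$, not on $u$, since $0\le u<1$).

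Next I would extract weak limits. The control sequence $(u^n)$ is bounded in $L^2(\mathcal{I}_T)$ (indeed in $L^\infty$ since $0\le u^n<1$), so up to a subsequence $u^n\rightharpoonup u^*$ weakly in $L^2(\mathcal{I}_T)$; because $\mathcal{U}_{ad}$ is convex and closed in $L^2$, it is weakly closed, hence $u^*\in\mathcal{U}_{ad}$. For the states, I would use the characterization \eqref{E4.4}: $\vartheta^n(t)=\Phi_{p^*}\bigl[\Phi_p(\vartheta_0)+{}^{\mathcal{AB}}I^\alpha(\Psi_{u^n}(\vartheta^n(t)))\bigr]$, where I make the dependence on the control explicit. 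Using the uniform bounds on $\vartheta^n$ together with the structure of $\Psi$ and the $\mathcal{AB}$ integral \eqref{E2.3}, one shows the right-hand side is uniformly bounded and equicontinuous in $t$ (the fractional-integral term $\int_0^t\Psi(\vartheta^n(s))(t-s)^{\alpha-1}ds$ is Hölder-continuous in $t$ with a modulus independent of $n$, and $\Phi_{p^*}$ is locally uniformly continuous on the relevant bounded set). By Arzelà--Ascoli, a further subsequence converges uniformly, $\vartheta^n\to\vartheta^*$ in $\mathcal{C}(\mathcal{I}_T,\mathbb{R}^2)$; in particular $\vartheta^*\in L^\infty(\mathcal{I}_T)$.

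Then I would pass to the limit in the state equation to identify $\vartheta^*$ as $\vartheta(u^*)$. In \eqref{E4.4} the only term carrying $u^n$ is the product $u^n(t)\,S^n(t)$ appearing inside $\Psi_1$ and $\Psi_2$; since $S^n\to S^*$ uniformly and $u^n\rightharpoonup u^*$ weakly in $L^2$, one has $u^nS^n\rightharpoonup u^*S^*$ weakly in $L^2$, and this suffices to pass to the limit in the $\mathcal{AB}$ integral because $(t-s)^{\alpha-1}\in L^2(0,t)$ (recall $\alpha\in(0,1)$, so $2(\alpha-1)>-1$ fails in general — here I would instead note $(t-s)^{\alpha-1}\in L^q$ for $q<1/(1-\alpha)$ and use the weak $L^2$ convergence together with strong convergence of the remaining factors, or simply observe that the bilinear terms converge since $S^n\to S^*$ strongly). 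The nonlinear terms $\beta S^nI^n$ and $\eta S^n,\eta I^n$ pass to the limit by uniform convergence, and $\Phi_{p^*}$ by continuity. Hence $\vartheta^*$ satisfies \eqref{E4.4} with control $u^*$, i.e.\ $\vartheta^*=\vartheta(u^*)$. Finally, lower semicontinuity of the cost: $I^n\to I^*$ uniformly gives $\int_0^T\mathscr{G}(I^n)^2\to\int_0^T\mathscr{G}(I^*)^2$, while $u\mapsto\int_0^T\frac{\varrho}{2}u^2$ is convex and strongly continuous on $L^2$, hence weakly lower semicontinuous, so $\int_0^T\frac{\varrho}{2}(u^*)^2\le\liminf\int_0^T\frac{\varrho}{2}(u^n)^2$. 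Adding these, $\mathscr{T}(u^*)\le\liminf_n\mathscr{T}(u^n)=\mathscr{T}^*$, and since $u^*\in\mathcal{U}_{ad}$ the reverse inequality is automatic, so $\mathscr{T}(u^*)=\mathscr{T}^*$ and $(u^*,\vartheta^*)$ is optimal.

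The main obstacle I anticipate is the compactness/equicontinuity step for the state trajectories: because the equation is $\ABC$-derivative of $\Phi_p(\vartheta)$ rather than of $\vartheta$ itself, one must first establish equicontinuity of $\Phi_p(\vartheta^n)$ from the $\mathcal{AB}$-integral representation and then transfer it to $\vartheta^n$ via the inverse $\Phi_{p^*}$, which requires knowing that $\vartheta^n$ stays in a region where $\Phi_{p^*}$ is uniformly continuous — this is where the uniform boundedness from Theorem~\ref{T1} and a uniform positive lower bound (or careful handling near zero, since $\Phi_{p^*}$ with $p^*\le 2$ is $(p^*-1)$-Hölder but not Lipschitz at the origin) are essential. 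A secondary delicate point is justifying the weak-limit passage in the fractional integral term involving $u^nS^n$; since $S^n$ converges strongly and uniformly this reduces to weak $L^2$ convergence of $u^n$ tested against an $L^2$-in-$s$ kernel, which is fine, but it should be stated carefully.
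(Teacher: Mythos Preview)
Your overall strategy---direct method with a minimizing sequence, extraction of limits, passage to the limit in the state equation, and weak lower semicontinuity of the cost---matches the paper's. Where you diverge is in the compactness mechanism: you aim for \emph{strong} (uniform) convergence of the states via Arzel\`a--Ascoli applied to the integral representation \eqref{E4.4}, whereas the paper extracts only \emph{weak} limits $\vartheta^n\rightharpoonup\vartheta^*$ in $(L^2)^2$ (and weak-$*$ in $(L^\infty)^2$) and then uses the integration-by-parts Lemma~\ref{L5} to pass to the limit in the differential form ${}^{\ABC}\Dat[\Phi_p(\vartheta^n)]=\Psi(\vartheta^n)$ in the sense of distributions. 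In exchange for the extra lemma, the paper avoids any equicontinuity argument altogether. Your explicit treatment of the lower-semicontinuity step for $\mathscr{T}$ is a point the paper's proof actually glosses over, so that part of your write-up is an improvement.

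There is, however, a genuine gap in your equicontinuity step that you do not identify (the obstacle you flag---inverting $\Phi_p$---is a separate and lesser issue). The $\mathcal{AB}$ integral \eqref{E2.3} is not purely a convolution: it contains the \emph{local} term $\frac{1-\alpha}{B(\alpha)}\Psi_{u^n}(\vartheta^n(t))$ in addition to the Riemann--Liouville-type integral. Only the latter is automatically H\"older in $t$ with a uniform modulus; the local term has, a priori, no more regularity than $\vartheta^n$ and $u^n$ themselves (and $u^n$ is merely measurable). So \eqref{E4.4} does not directly give equicontinuity of $\Phi_p(\vartheta^n)$, and your Arzel\`a--Ascoli argument as written does not close. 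One can try to rescue it by rearranging \eqref{E4.4} as $\Phi_p(\vartheta^n(t))-\frac{1-\alpha}{B(\alpha)}\Psi_{u^n}(\vartheta^n(t))=\text{(uniformly H\"older)}$ and inverting the left side, but this requires both a smallness condition on $\frac{1-\alpha}{B(\alpha)}$ relative to the Lipschitz constant of $\Psi$ and some control on the $t$-dependence coming through $u^n$. The paper's weak-convergence route with Lemma~\ref{L5} sidesteps this entirely, which is the main reason it was chosen.
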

\begin{proof}
Let $((\vartheta^n, u^n))_n$ such as
\begin{equation*}
   \mathscr{T}(u^*) \le\mathscr{T}(u^n) \le\mathscr{T}(u^*) + \frac{1}{n}, \quad \forall n \in \mathbb{N}^*,
\end{equation*}
where $u^n\in U_{ad}$ and $\vartheta^n = \vartheta(u^n)$ satisfying the system
\begin{equation*}
\begin{cases}
   { }^{\ABC} \Dat \bigl[\Phi_p(\vartheta^n)\bigr] 
   = \Psi(\vartheta^n), &\text{ in } \mathcal{I}_T,\\ 
   \vartheta^n(0) = \vartheta_0.&
\end{cases}
\end{equation*}
Based on Theorem~\ref{T1}, we can say that $\vartheta^n$ is bounded.
Consequently, the second term $\Psi(\vartheta^n)$ is also bounded. 
Thus, there exists a positive constant $c$ such that
\begin{equation*}
    \Bigl\| { }^{\ABC} \Dat \bigl[\Phi_p(\vartheta^n)\bigr]\Bigr\|_2 \le c.
\end{equation*}
Then there exists a subsequence of ($\vartheta^n$), again denoted by ($\vartheta^n$), such that
\begin{gather}
 { }^{\ABC} \Dat \bigl[\Phi_p(\vartheta^n)\bigr] 
 \rightharpoonup  \phi \text { weakly in } \bigl(L^2(\mathcal{I}_T)\bigr)^2, \notag\\
 \vartheta^n \rightharpoonup  \vartheta^* \text { weakly-star in } 
  \bigl(L^\infty(\mathcal{I}_T)\bigr)^2,\label{E6.3}\\
 \vartheta^n \rightharpoonup  \vartheta^* \text { weakly in } \bigl(L^2(\mathcal{I}_T)\bigr)^2.\notag
\end{gather}
It's worth noting that $D^\prime(\mathcal{I}_T)$ is the dual of $C_0^\infty(\mathcal{I}_T)$.
Then, for all $\varphi\in C_0^\infty(\mathcal{I}_T)$, we get
\begin{align*}
\int_0^T \Phi_p(x^n) \ \bigl({ }^{\ABC}_T \Dat \varphi\bigr) \,dt &\longrightarrow 
\int_0^T \Phi_p(x^*) \ \bigl({ }^{\ABC}_T \Dat \varphi\bigr) \,dt,\\
\varphi(T) \int_0^T \Phi_p(x^n) E_{\alpha, \alpha}[-\gamma(T-t)^{\alpha}] \,dt &\longrightarrow \varphi(T) \int_0^T \Phi_p(x^*) E_{\alpha, \alpha}[-\gamma(T-t)^{\alpha}]\, dt,
\end{align*}
and
\begin{align*}
\int_0^T \Phi_p(y^n) \ \bigl({ }^{\ABC}_T \Dat \varphi\bigr) \,dt 
&\longrightarrow 
\int_0^T \Phi_p(y^*) \ \bigl({ }^{\ABC}_T \Dat \varphi\bigr) \,dt,\\
\varphi(T) \int_0^T \Phi_p(y^n) E_{\alpha,\alpha} [-\gamma(T-t)^{\alpha}]\, dt &\longrightarrow \varphi(T) 
\int_0^T \Phi_p(y^*) E_{\alpha, \alpha}[-\gamma(T-t)^{\alpha}] \,dt.
\end{align*}
By Lemma \ref{L5}, we obtain
\begin{align*}
& { }^{\ABC} \Dat \bigl[\Phi_p(x^n)\bigr] \rightharpoonup { }^{\ABC} \Dat \bigl[\Phi_p(x^*)\bigr] \text{ weakly in } D^\prime(\mathcal{I}_T),\\
& { }^{\ABC} \Dat \bigl[\Phi_p(y^n)\bigr] \rightharpoonup { }^{\ABC} \Dat \bigl[\Phi_p(y^*)\bigr] \text{ weakly in } D^\prime(\mathcal{I}_T).
\end{align*}
Expressing $x^n y^n - x^* y^* = (x^n-x^*) y^n + x^* (y^n - y^*)$, based on \eqref{E6.3} and the boundedness of $(x^n)$ and $(y^n)$, 
it follows that $x^n y^n \to x^* y^*$ in $L^2(\mathcal{I}_T)$. 
In addition, we observe $u^n \to u^*$ in $L^2(\mathcal{I}_T)$ along a subsequence of $(u^n)$, denoted again by $(u^n)$. 
Using the closeness and convexity of $U_{a d}$, it can be deduced that $U_{a d}$ is weakly closed. 
Consequently, $u^* \in U_{a d}$, and similarly, as described earlier, $u^n x^n \to u^* x^*$ in $L^2(\mathcal{I}_T)$.

Continuing, we can take the limit in the system satisfied by $\vartheta^n$ as $n \to\infty$, 
which leads to the conclusion that the optimal solution for \eqref{E3.1}--\eqref{E3.2} is given by $(\vartheta^*, u^*)$.
\end{proof}


\subsection{Optimality conditions}\label{SubS6.2}
Let $\vartheta^{\varepsilon} = (x^{\varepsilon}, y^{\varepsilon}) = (x,y)(u^{\varepsilon})$ and $\vartheta^* = (x^*, y^*) = (x, y)(u^*)$ be the approximate and optimal solutions of \eqref{E3.1}--\eqref{E3.2} and \eqref{E6.1} respectively, 
where $u^{\varepsilon}=u^* + \varepsilon u \in U_{a d}, \ \forall u \in U_{a d}$. 
We subtract the system associated with $\vartheta^*$ from the one corresponding to $\vartheta^{\varepsilon} = \vartheta^* + \varepsilon \mathscr{Z}^{\varepsilon}$ (with $\mathscr{Z}^{\varepsilon} = ( \mathscr{Z}_1^{\varepsilon}, \mathscr{Z}_2^{\varepsilon})$), we get
\begin{equation}\label{E6.4}
\begin{cases}
{ }^{\ABC} \Dat\bigl[\Phi_p(\vartheta^{\varepsilon}) - \Phi_p(\vartheta^*)\bigr] 
=  \Psi(\vartheta^{\varepsilon}) - \Psi(\vartheta^*), &\text { in } \mathcal{I}_T, \\
\mathscr{Z}^{\varepsilon}(0)=0.
\end{cases}
\end{equation}
Since $\Phi_p(\vartheta^{\varepsilon}(0)) - \Phi_p(\vartheta^*(0)) = 0$, then applying \eqref{E2.4} to the first equation of \eqref{E6.4}, we have
\begin{equation*}
\Phi_p(\vartheta^{\varepsilon}) - \Phi_p(\vartheta^*) = { }^{\mathcal{AB}} I^{\alpha} \bigl( \Psi(\vartheta^{\varepsilon}) - \Psi(\vartheta^*)\bigr).
\end{equation*}
Remember that $\Phi_p$ is a regular function.
Using the mean value theorem, we find that there exists a constant $\zeta$ such that
\begin{align*}
\Phi_p(\vartheta^{\varepsilon}) - \Phi_p(\vartheta^*) &= \Phi_p^\prime(\zeta) (\vartheta^{\varepsilon} - \vartheta^*) = \varepsilon \Phi_p^\prime(\zeta) \mathscr{Z}^{\varepsilon}\\
&= \varepsilon \Phi_p^\prime(\zeta) \mathscr{Z}^{\varepsilon}(t) - \varepsilon \Phi_p^\prime(\zeta) \mathscr{Z}^{\varepsilon}(0)\\
&= { }^{\mathcal{AB}} I^{\alpha} 
\bigl( \varepsilon \Phi_p^\prime(\zeta) { }^{\ABC} \Dat [ \mathscr{Z}^{\varepsilon}(t)]\bigr).
\end{align*}
Thus,
\begin{equation*}
  \Phi_p^\prime(\zeta) { }^{\ABC} \Dat \bigl[ \mathscr{Z}^{\varepsilon}(t)\bigr]
  = \frac{\Psi(\vartheta^{\varepsilon}) - \Psi(\vartheta^*)}{\varepsilon}.
\end{equation*}
On the other side
\begin{equation*}
\frac{\Psi(\vartheta^{\varepsilon}) - \Psi(\vartheta^*)}{\varepsilon}
= \mathscr{N}^{\varepsilon} \mathscr{Z}^{\varepsilon} + \mathscr{Y}u,
\end{equation*}
with 
\begin{equation*}
\mathscr{N}^{\varepsilon} = \begin{pmatrix}
\mu - \beta y^{\varepsilon} - u^{\varepsilon} - \eta & \mu - \beta x^*\\
\beta y^{\varepsilon} + u^{\varepsilon} & \beta x^* - \eta
\end{pmatrix}
\ \text{ and } \
\mathscr{Y} = \begin{pmatrix}
-x^* \\
x^*
\end{pmatrix}.
\end{equation*}
The components of the matrix $\mathscr{N}^{\varepsilon}$ are uniformly bounded with respect to $\varepsilon$. 
By using Lemma~\ref{L5} and a similar approach described in Subsection~\ref{SubS6.1}, as $\varepsilon$ tends to 0 in \eqref{E6.4}, we obtain
\begin{equation}\label{E6.5}
\begin{cases}
\Phi_p^\prime(\vartheta^*) { }^{\ABC} \Dat \bigl[ \mathscr{Z}(t)\bigr] = \mathscr{N}\mathscr{Z} + \mathscr{Y}u, &\text { in } \mathcal{I}_T, \\ 
\mathscr{Z}(0) = 0,
\end{cases}
\end{equation}
where 
\begin{equation*}
   \mathscr{N} = \begin{pmatrix}
    \mu - \beta y^* - u^* - \eta & \mu - \beta x^*\\
    \beta y^* + u^* & \beta x^* - \eta \end{pmatrix}.
\end{equation*}
For $p = 2$, according to Theorem~\ref{T4}, we can state that the problem \eqref{E6.5} has a unique solution.
We now introduce $\mathscr{P} = (\mathscr{P}_1, \mathscr{P}_2)$ in such a way that
\begin{equation*}
    \Phi_p^\prime(\vartheta^*) \int_0^T \bigl( { }^{\ABC} \Dat[\mathscr{Z}(t)]\bigr) \mathscr{P} \,dt 
   = \int_0^T (\mathscr{N}\mathscr{Z} + \mathscr{Y}u) \mathscr{P}\,dt.
\end{equation*}
By Lemma~\ref{L5}, we have
\begin{equation*}
    \int_0^T ({ }^{\ABC} \Dat \mathscr{Z}) \ \mathscr{P} \,dt 
    = -\int_0^T \mathscr{Z} \ ({ }^{\ABC}_T \Dat \mathscr{P}) \,dt 
       + \frac{B(\alpha)}{1-\alpha} \mathscr{P}(T) 
    \int_0^T \mathscr{Z}(t) E_{\alpha, \alpha}[-\gamma (T-t)^{\alpha}] \,dt.
\end{equation*}
Consequently, the dual system corresponding to \eqref{E3.1}--\eqref{E3.2} can be formulated as
\begin{equation}\label{E6.6}
\begin{cases}
-\Phi_p^\prime(\vartheta^*){ }^{\ABC}_T \Dat \mathscr{P} - \mathscr{N} \mathscr{P} = \mathscr{W}^*\mathscr{W}\vartheta^* , &\text { in } \mathcal{I}_T, \\ 
\mathscr{P}(T) = \mathscr{W}^*\mathscr{W}\vartheta^*(T),
\end{cases}
\end{equation}
with
\begin{equation*}
\mathscr{W} = \begin{pmatrix}0 & 0 \\0 & 1\end{pmatrix}.
\end{equation*}
We can show that the problem \eqref{E6.6} has a unique solution (Theorem~\ref{T4} with $p = 2$).

\begin{theorem}\label{T7}
Let $\mathscr{P}$ be the solution of \eqref{E6.6}. Then,
\begin{equation}\label{E6.7}
    u^* = \max \biggl\{0, \min \Bigl(\frac{\mathscr{P}_1^* - \mathscr{P}_2^*}{\varrho} S^*, 1\Bigr)\biggr\}.
\end{equation}
\end{theorem}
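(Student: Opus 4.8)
The plan is to derive the optimality condition by computing the Gâteaux derivative of the objective functional $\mathscr{T}$ at $u^*$ in an admissible direction and then invoking the adjoint system \eqref{E6.6} to eliminate the sensitivity variable $\mathscr{Z}$. First I would write down the first-order optimality inequality: since $u^*$ minimizes $\mathscr{T}$ over the convex set $\mathcal{U}_{ad}$, for every admissible $u$ and every $\varepsilon>0$ small enough with $u^{\varepsilon}=u^*+\varepsilon u\in\mathcal{U}_{ad}$ one has $\mathscr{T}(u^{\varepsilon})\ge\mathscr{T}(u^*)$, hence
\begin{equation*}
0\le\lim_{\varepsilon\to0^+}\frac{\mathscr{T}(u^{\varepsilon})-\mathscr{T}(u^*)}{\varepsilon}
=\int_0^T\bigl(2\mathscr{G}\,y^*\,\mathscr{Z}_2+\varrho\,u^*\,u\bigr)\,dt,
\end{equation*}
where $\mathscr{Z}=(\mathscr{Z}_1,\mathscr{Z}_2)$ is the solution of the linearized (sensitivity) system \eqref{E6.5}. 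Here I would use that $\vartheta^{\varepsilon}\to\vartheta^*$ and $(\vartheta^{\varepsilon}-\vartheta^*)/\varepsilon\to\mathscr{Z}$ in the appropriate topology, justified exactly as in the passage to the limit carried out in Subsection~\ref{SubS6.1} and in the derivation of \eqref{E6.5}.

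Next I would pair the sensitivity equation against the adjoint variable $\mathscr{P}$. Starting from $\Phi_p^\prime(\vartheta^*)\,{}^{\ABC}\Dat[\mathscr{Z}]=\mathscr{N}\mathscr{Z}+\mathscr{Y}u$, multiply by $\mathscr{P}$ and integrate over $\mathcal{I}_T$; applying the integration-by-parts formula of Lemma~\ref{L5} and using $\mathscr{Z}(0)=0$ together with the terminal condition $\mathscr{P}(T)=\mathscr{W}^*\mathscr{W}\vartheta^*(T)$ converts the left-hand side into $-\int_0^T\mathscr{Z}\bigl(\Phi_p^\prime(\vartheta^*)\,{}^{\ABC}_T\Dat\mathscr{P}\bigr)dt$ plus the boundary term, and then the adjoint equation \eqref{E6.6}, namely $-\Phi_p^\prime(\vartheta^*)\,{}^{\ABC}_T\Dat\mathscr{P}-\mathscr{N}\mathscr{P}=\mathscr{W}^*\mathscr{W}\vartheta^*$, lets me replace everything by $\int_0^T\mathscr{Z}\,\mathscr{W}^*\mathscr{W}\vartheta^*\,dt=\int_0^T 2\mathscr{G}\,y^*\mathscr{Z}_2\,dt$ (absorbing the factor $2\mathscr{G}$ into $\mathscr{W}$ or carrying it explicitly). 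Consequently $\int_0^T 2\mathscr{G}\,y^*\mathscr{Z}_2\,dt=\int_0^T(\mathscr{Y}u)\cdot\mathscr{P}\,dt=\int_0^T(-\mathscr{P}_1^*+\mathscr{P}_2^*)\,x^*\,u\,dt$, using $\mathscr{Y}=(-x^*,x^*)^{\mathsf T}$ and writing $x^*=S^*$.

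Substituting this identity into the optimality inequality gives
\begin{equation*}
0\le\int_0^T\Bigl(\varrho\,u^*-(\mathscr{P}_1^*-\mathscr{P}_2^*)\,S^*\Bigr)u\,dt
\qquad\text{for all admissible directions }u.
\end{equation*}
A standard pointwise (bang-bang/variational) argument on the bound-constrained set $0\le u(t)<1$ then forces: $\varrho u^*(t)-(\mathscr{P}_1^*-\mathscr{P}_2^*)S^*=0$ where $u^*$ is interior, $\ge0$ where $u^*=0$, and the clipping at the upper end, which together are summarized by the projection formula $u^*=\max\{0,\min(\frac{\mathscr{P}_1^*-\mathscr{P}_2^*}{\varrho}S^*,\,1)\}$, i.e.\ \eqref{E6.7}. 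The main obstacle I anticipate is the rigorous justification of the limit $(\vartheta^{\varepsilon}-\vartheta^*)/\varepsilon\to\mathscr{Z}$ and the differentiability of $u\mapsto\vartheta(u)$ in the presence of the nonlinear operator ${}^{\ABC}\Dat[\Phi_p(\cdot)]$ — one must control the term $\Phi_p^\prime(\zeta)$ arising from the mean value theorem uniformly in $\varepsilon$ (using boundedness from Theorem~\ref{T1} and, for $p=2$, the uniqueness of the linearized system from Theorem~\ref{T4}); once that sensitivity calculus is in place, the adjoint pairing and the pointwise projection argument are routine.
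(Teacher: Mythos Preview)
Your argument is correct but follows a different route from the paper. The paper proceeds via the Pontryagin/Hamiltonian formalism: it simply writes down the Hamiltonian
\[
\mathscr{H}=\mathscr{G}I^2+\tfrac{\varrho}{2}u^2+\mathscr{P}_1(\mu N-\beta SI-uS-\eta S)+\mathscr{P}_2(\beta SI+uS-\eta I),
\]
invokes the necessary conditions from \cite{Khan2021}, and reads off $u^*$ from $\partial\mathscr{H}/\partial u=\varrho u-(\mathscr{P}_1-\mathscr{P}_2)S=0$ followed by projection onto $[0,1]$. Your approach instead works directly with the reduced functional: you compute the G\^ateaux derivative of $\mathscr{T}$, use the sensitivity system \eqref{E6.5} and the adjoint system \eqref{E6.6} together with the integration-by-parts Lemma~\ref{L5} to eliminate $\mathscr{Z}$, and arrive at the variational inequality $\int_0^T(\varrho u^*-(\mathscr{P}_1^*-\mathscr{P}_2^*)S^*)\,u\,dt\ge0$ from which the pointwise projection formula follows. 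What your route buys is that it makes transparent \emph{why} \eqref{E6.6} is the right adjoint system and how it couples to the cost --- the paper derives \eqref{E6.6} separately in Subsection~\ref{SubS6.2} but then does not use it explicitly in the proof of Theorem~\ref{T7}, relying on the Hamiltonian shortcut instead. The paper's route is shorter and avoids the delicate passage to the limit you flag as an obstacle. Both are standard and valid; your version is closer to a self-contained Lagrangian derivation, the paper's is a direct appeal to the maximum principle.
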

\begin{proof}
Let $(S, I)$ be the solution of \eqref{E3.1}--\eqref{E3.2}. The Hamiltonian associated with \eqref{E6.1} is defined by
\begin{equation}\label{E6.8}
\begin{split}
\mathscr{H}(S, I, \mathscr{P}_1, \mathscr{P}_1, u) 
&=  \ \mathscr{G} I^2(t) + \frac{\varrho}{2} u^2(t)\\
&\qquad + \mathscr{P}_1 (\mu N-\beta S I - uS - \eta S) + \mathscr{P}_2 (\beta S I + uS - \eta I),
\end{split}
\end{equation}
where $(\mathscr{P}_1, \mathscr{P}_2)$ is the adjoint variable. 
Following \cite{Khan2021}, we obtain the necessary optimality conditions for \eqref{E3.1}--\eqref{E3.2} and \eqref{E6.1} such that
\begin{equation}\label{E6.9}
\begin{split}
& { }^{\ABC} \Dat \bigl[\Phi_p(S(t))\bigr] = \frac{\partial \mathscr{H}}{\partial \mathscr{P}_1}(S, I, \mathscr{P}_1, \mathscr{P}_1, u), \\
& { }^{\ABC} \Dat \bigl[ \Phi_p(I(t))\bigr] = \frac{\partial \mathscr{H}}{\partial \mathscr{P}_2}(S, I, \mathscr{P}_1, \mathscr{P}_1, u), \\ 
& \frac{\partial \mathscr{H}}{\partial u}(S^*, I^*, \mathscr{P}_1^*, \mathscr{P}_1^*, u^*) = 0.
\end{split}
\end{equation}
Since $\frac{\partial \mathscr{H}}{\partial u}(S^*, I^*, \mathscr{P}_1^*, \mathscr{P}_1^*, u^*) = 0$, then
\begin{equation*}
    u^* = \max \biggl\{0, \min \Bigl(\frac{\mathscr{P}_1^* - \mathscr{P}_2^*}{\varrho} S^*, 1\Bigr)\biggr\}.
\end{equation*}
\end{proof}

\section{Numerical simulations}\label{S7}
In this section, we present a numerical approach to solving the optimality system over the entire time interval $\mathcal{I}_T$. 
Based on \cite{Gatto2021, Lin2010, SidiAmmi2023}, we provide specific (positive) parameters and initial conditions, as detailed in Table~\ref{Tab2}, assuming the boundedness constraint \eqref{E4.1}.

\setlength{\tabcolsep}{0.5cm}
\begin{table}[H]
\centering
\caption{Initial conditions and parameters values.}\label{Tab2}
\begin{tabular}{|c||c||c|}
\hline Symbol & Description & Value\\
\hline\hline 
$S_0$ & Initial susceptible individuals & 10\\
\hline $I_0$ & Initial infected individuals & 4\\
\hline $\mu$ & Birth rate & 0.04\\
\hline $\beta$ & Effective contact rate & 0.09\\
\hline $\eta$ & Natural mortality rate & 0.04\\
\hline $T$ & Final time & 200\\
\hline
\end{tabular}
\end{table}

\subsection{Forward-Backward Sweep Method}
We show numerical simulations related to the above optimal control problem. 
A MatLab code has been developed and several simulations have been performed using different data. 
The optimality system is solved based on an iterative discrete scheme that converges after an appropriate test similar to the one related to the \textit{Forward-Backward Sweep Method} (FBSM). 
Due to transversality conditions, the state model is solved forward in time, followed by the backward solution of the adjoint system. 
Then, the optimal control values are updated using the state and adjoint variables obtained in the previous steps. 
This iterative process is repeated until a tolerance criterion is satisfied. 
The algorithm is summarized in Algorithm~\ref{Alg1}.

\begin{property}[\cite{Djida2018}]\label{P1}
Let $\mathscr{X}\colon\mathcal{I}_T \to\mathbb{R}$. Then,
\begin{equation}\label{E7.1}
     {}^{\ABC}_T \Dat \mathscr{X}(t) = {}^{\ABC} \Dat \mathscr{X}(T-t).
\end{equation}
\end{property}

\begin{algorithm}[htb]
\caption{FBSM.}\label{Alg1}
\begin{algorithmic}[1]
\REQUIRE Let $\mathcal{S} = \{t_i = 1 + i\delta_t / i=0,\cdots \mathcal{N} - 1\}$ be a uniform subdivision on $\mathcal{I}_T$,
where $\delta_t$ is the step size of this subdivision and $\mathcal{N}$ is the number of sub-intervals. 
$\delta = 10^{-3}$ is tolerance and $Rel\_Err = -1$ is the relative error.

\STATE Variables initialization: For the state problem $(S_{\rm old}, I_{\rm old})$,
the adjoint system $(\mathscr{P}_{rm old_{1}}, \mathscr{P}_{\rm old_{2}})$, and the control $u_{\rm old}$.

\WHILE{$Rel\_Err < 0$}
\STATE Solve \eqref{E3.1} with initial condition \eqref{E3.2} given in Table \ref{Tab2}.

\STATE Under \eqref{E7.1}, solve the \eqref{E6.6} for $(\mathscr{P}_{\rm old_{1}}, \mathscr{P}_{\rm old_{2}})$ using the transversality conditions $(\mathscr{P}_{\rm old_{1}}(T), \mathscr{P}_{\rm old_{2}}(T))$ and $(S, I)$.

\STATE Use \eqref{E6.7} to update the control $u$.

\STATE Calculate relative errors: 
\begin{itemize}
\item[] $\psi_{1} = \delta\|S\| - \|S - S_{\rm old}\|$, 
\quad $\psi_{2} = \delta\|I\| - \|I - I_{\rm old}\|$,
\item[] $\psi_{3} = \delta\|\mathscr{P}_{1}\| - \|\mathscr{P}_{1} - \mathscr{P}_{\rm old_{1}}\|$, 
\quad $\psi_{4} = \delta\|\mathscr{P}_{2}\| - \|\mathscr{P}_{2} - \mathscr{P}_{\rm old_{1}}\|$,
\item[] $\psi_{5} = \delta\|u\| - \|u - u_{\rm old}\|$.
\end{itemize}

\STATE $Rel\_Err = \min\{\psi_1, \psi_2, \psi_3, \psi_4, \psi_5\}$.
\ENDWHILE
\end{algorithmic}
\end{algorithm}

\subsection{Numerical approximations in the absence of vaccination}
The Figures \ref{F2}, \ref{F3} and \ref{F4} show 
\textit{numerical approximations in the absence of vaccination} (NAAV), considering different values of $\alpha$ and $p$. 
Throughout the observation period, the number of susceptible individuals decreases as the number of infected individuals rapidly increases.
In each of these three cases ($\alpha = 1$, $\alpha = 0.9$, and $\alpha = 0.8$), we can observe that the delayed spread of the disease is directly related to the values of $p$ and $\alpha$.
Thus, whenever $p$ is large, it results in the slow spread of the disease. 
In addition, the values of $\alpha$ are also related to how quickly the disease spreads, 
because natural derivative values result in the rapid spread of the disease, while fractional values result in the slow spread of the disease.

\begin{figure}[H]
\centering
\includegraphics[height=8cm]{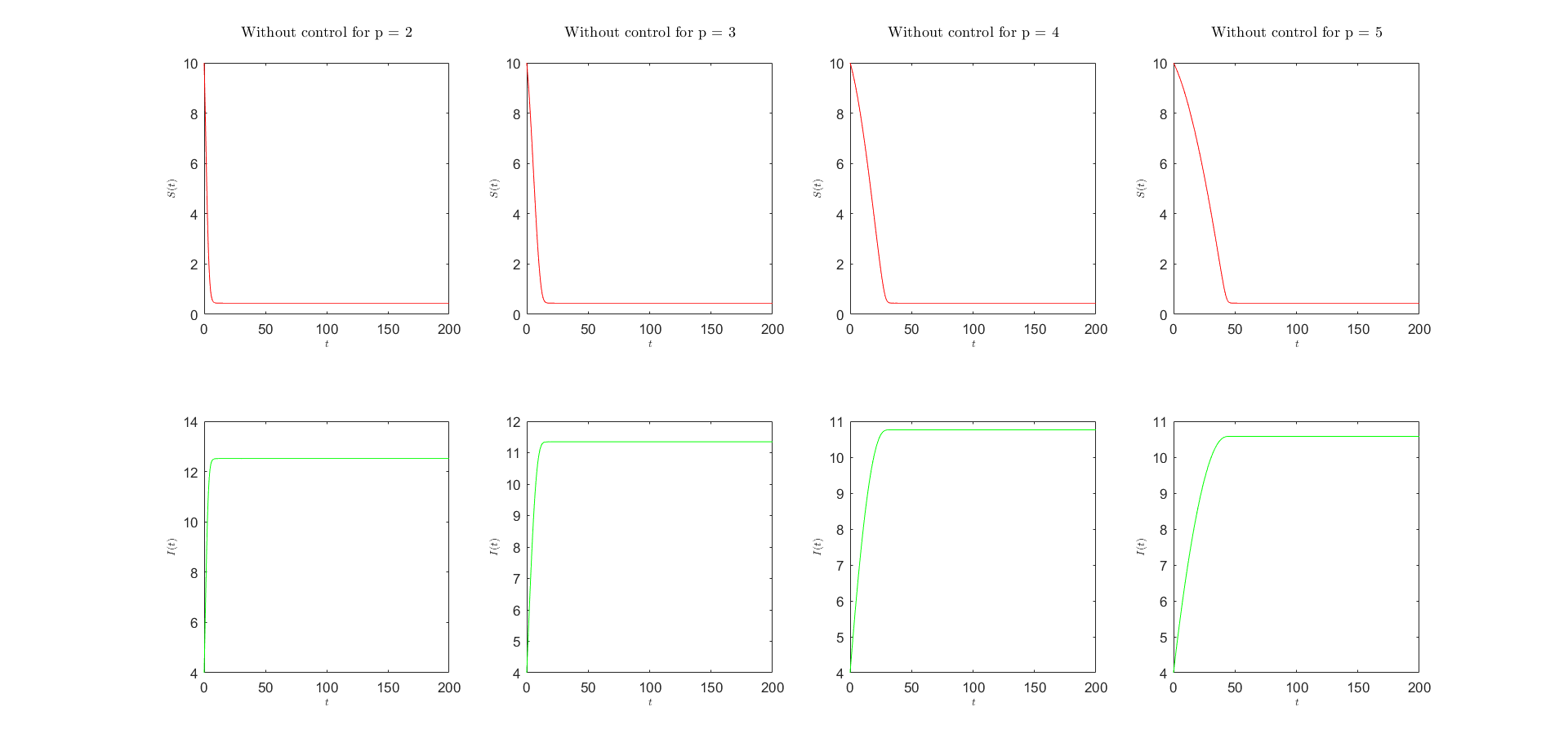}
\caption{Numerical simulations of the SI model \eqref{E3.1}--\eqref{E3.2} for $\alpha = 1$ without control.}\label{F2}
\end{figure}

\begin{figure}[htb]
\centering
\includegraphics[height=8cm]{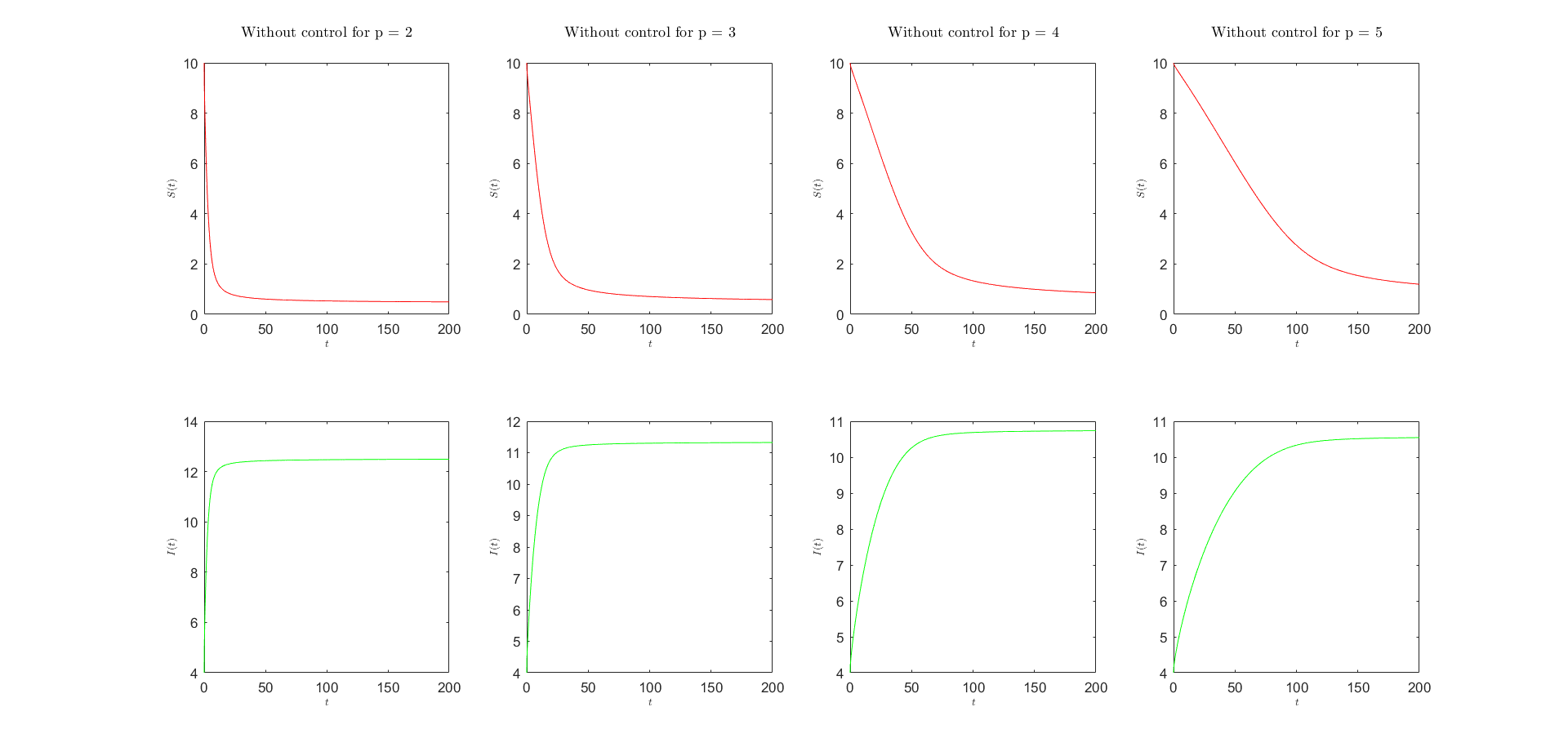}
\caption{Numerical simulations of the SI model \eqref{E3.1}--\eqref{E3.2} for $\alpha = 0.9$ without control.}\label{F3}
\end{figure}

\begin{figure}[htb]
\centering
\includegraphics[height=8cm]{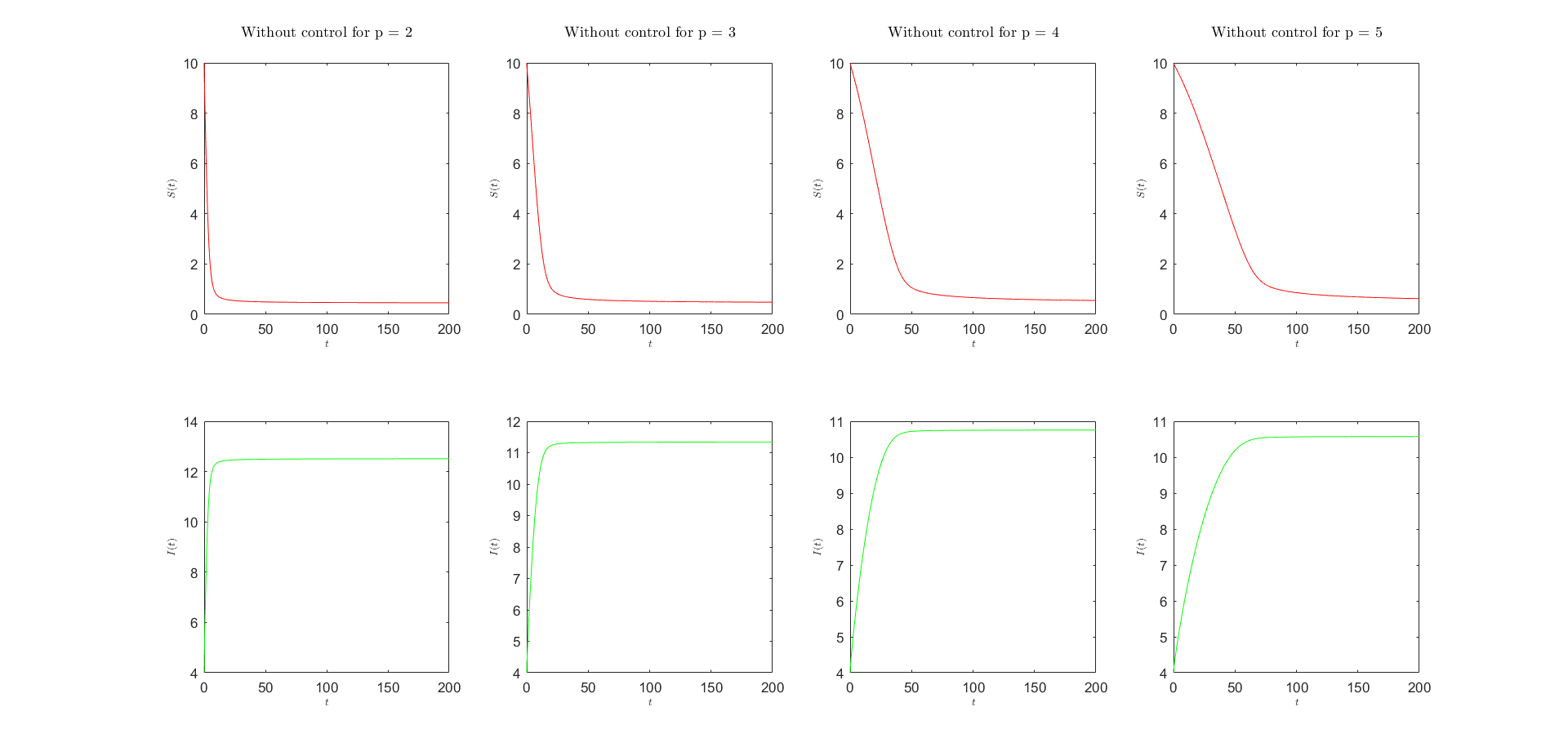}
\caption{Numerical simulations of the SI model \eqref{E3.1}--\eqref{E3.2} for $\alpha = 0.8$ without control.}\label{F4}
\end{figure}

\setlength{\tabcolsep}{0.5cm}
\begin{table}[htb]
\centering
\caption{The cost of $\mathscr{T}$ without control.}\label{Tab3}
\begin{tabular}{|c|c|c|c|}
\hline
\diagbox[width=19mm,height=4.5mm]{}{} & $\alpha = 0.8$ & $\alpha = 0.9$ & $\alpha = 1$\\
\hline
$p = 2$ & $1.2641e^{+06}$ & $1.2579e^{+06}$ & $1.2672e^{+06}$\\
\hline
$p = 3$ & $1.0124e^{+06}$ & $1.0021e^{+06}$ & $ 1.0174e^{+06}$\\
\hline
$p = 4$ & $8.6621e^{+05}$ & $8.3237e^{+05}$ & $8.8288e^{+05}$\\
\hline
$p = 5$ & $7.8946e^{+05}$ & $7.2172e^{+05}$ & $8.2216e^{+05}$\\
\hline
\end{tabular}
\end{table}

\clearpage
\subsection{Numerical approximations in the presence of vaccination}
The vaccination strategy is currently being implemented, with the assumption that vaccination will begin on the first day. 
From Figures~\ref{F5}, \ref{F6} and \ref{F7}, we can see that the vaccination strategy has spectacularly fought the spread of the epidemic,
where the proportion of infected individuals has been reduced and, conversely, the number of susceptible individuals is constantly increasing.
We can say that our proposed strategy is effective in controlling the spread of the epidemic.

\begin{figure}[H]
\centering
\includegraphics[height=8cm]{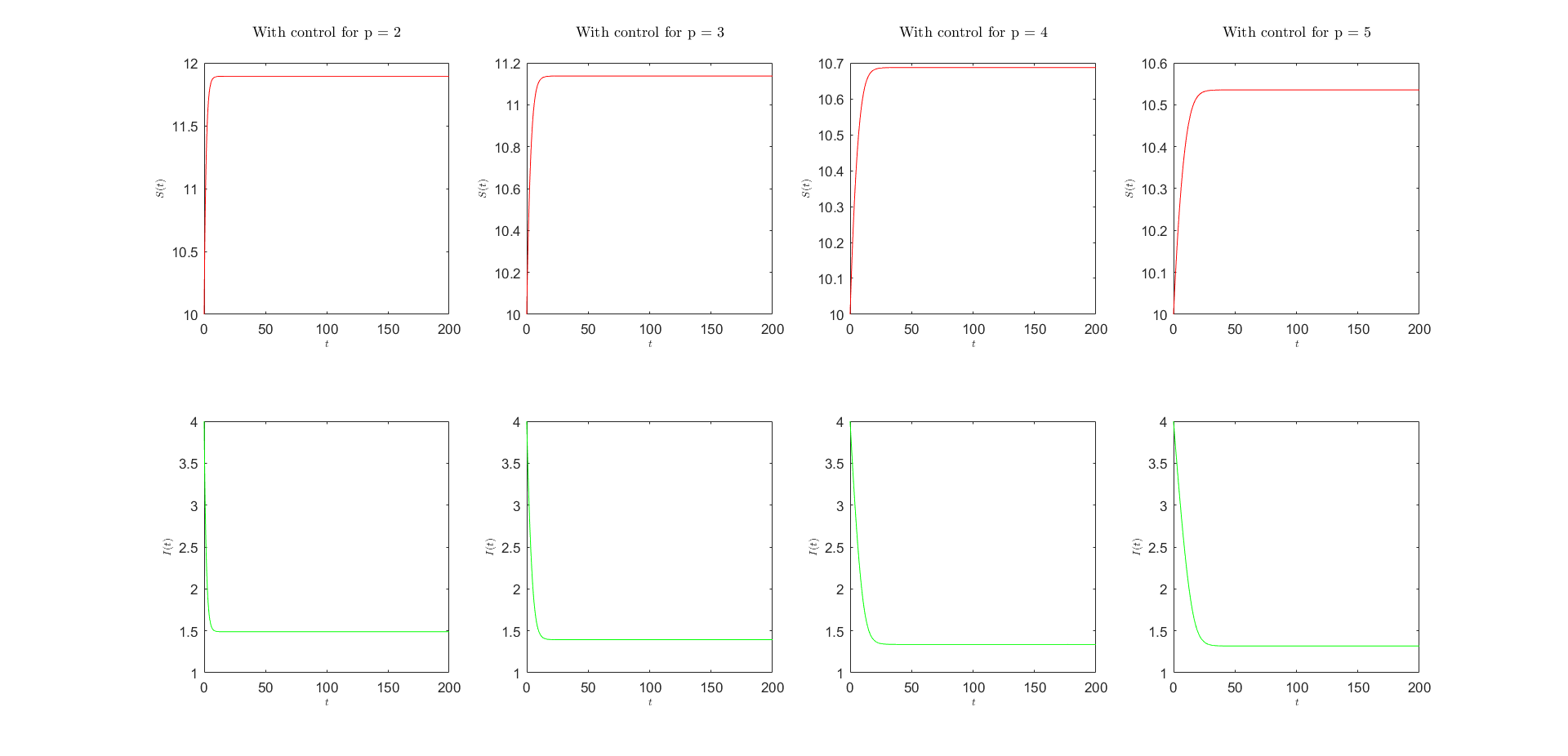}
\caption{Numerical simulations of the SI model \eqref{E3.1}--\eqref{E3.2} for $\alpha = 1$ with control.}\label{F5}
\end{figure}

\begin{figure}[H]
\centering
\includegraphics[height=8cm]{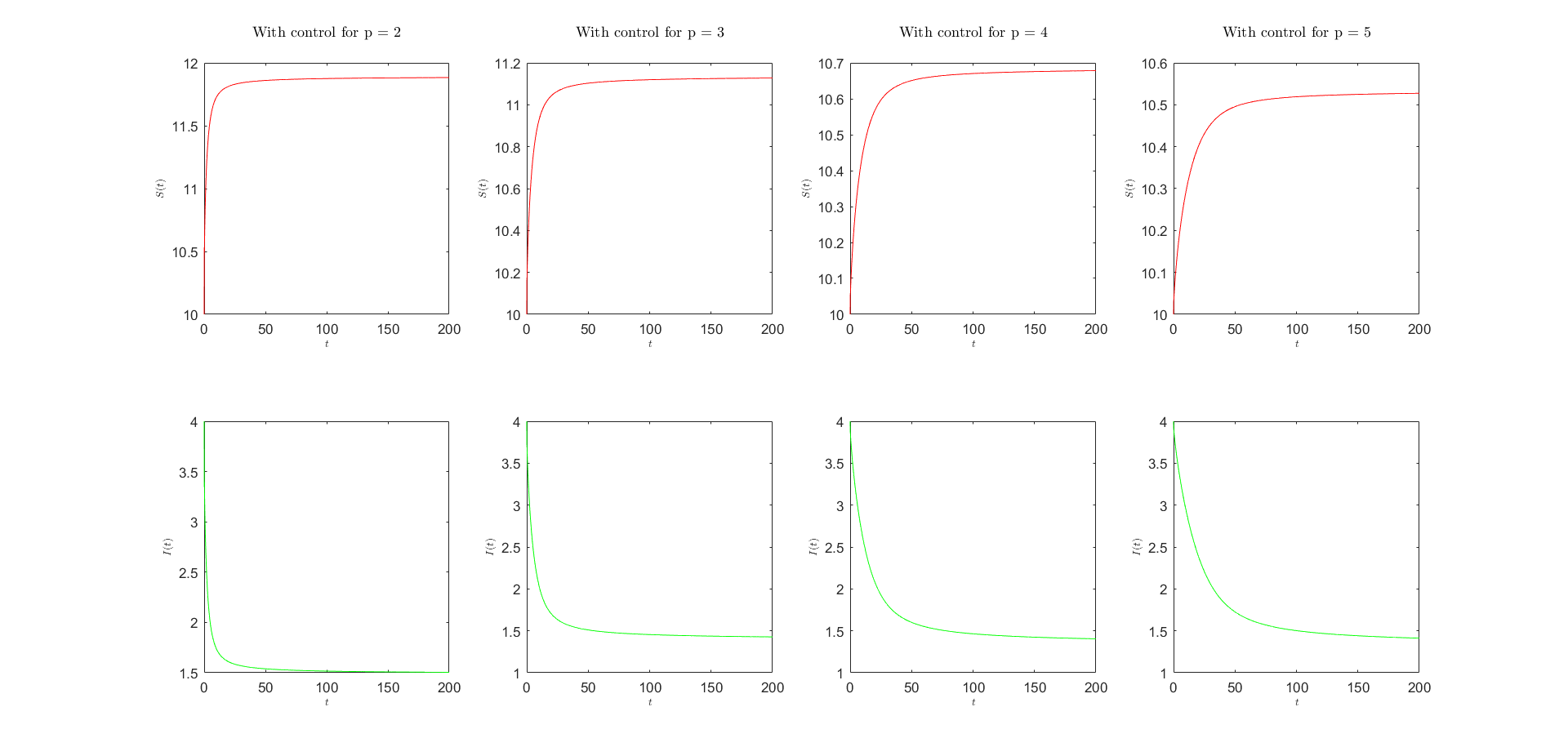}
\caption{Numerical simulations of the SI model \eqref{E3.1}--\eqref{E3.2} for $\alpha = 0.9$ with control.}\label{F6}
\end{figure}

\begin{figure}[H]
\centering
\includegraphics[height=8cm]{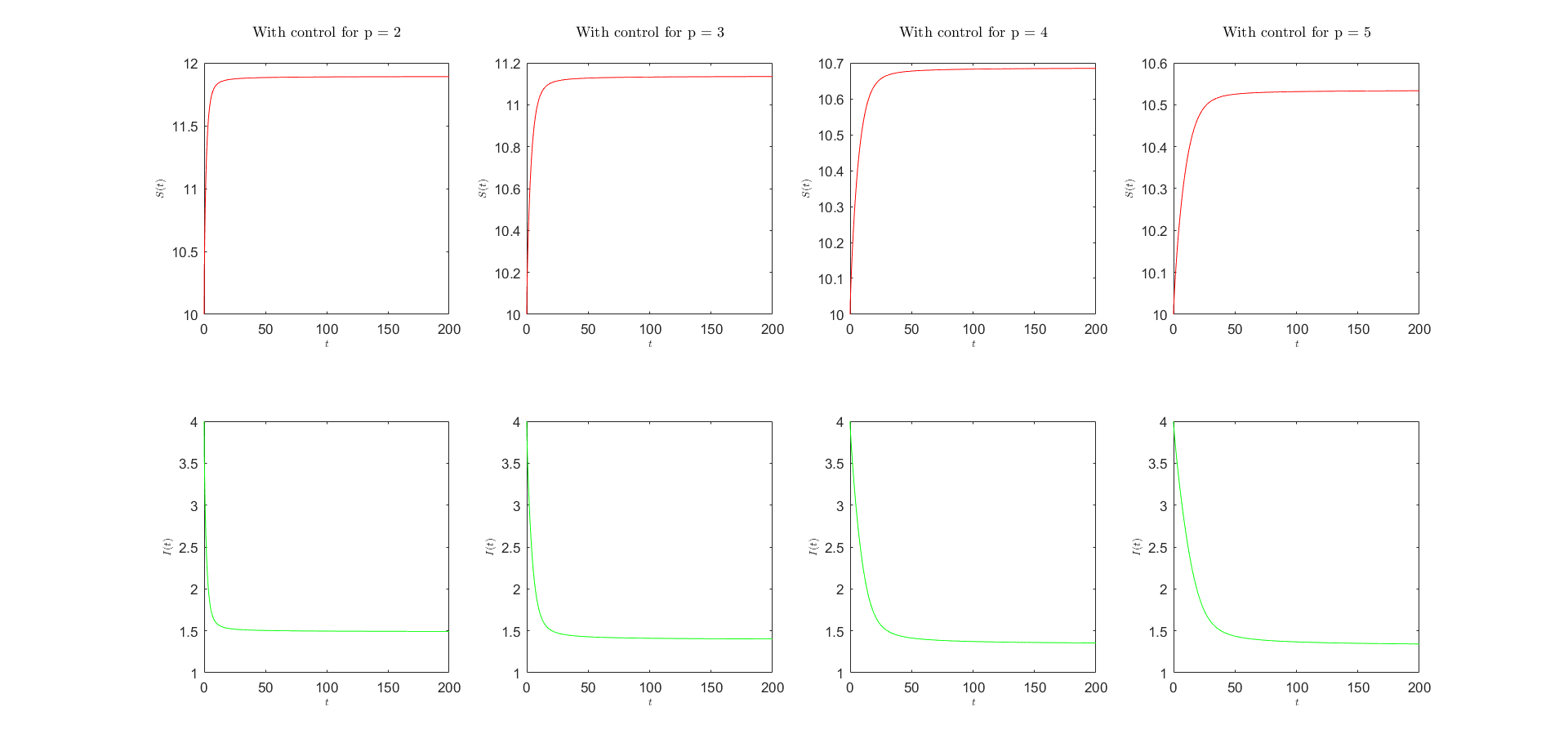}
\caption{Numerical simulations of the SI model \eqref{E3.1}--\eqref{E3.2} for $\alpha = 0.8$ with control.}\label{F7}
\end{figure}

\setlength{\tabcolsep}{0.5cm}
\begin{table}[H]
\centering
\caption{The cost of $\mathscr{T}$ with control.}\label{Tab4}
\begin{tabular}{|c||c|c|c|}
\hline
\diagbox[width=19mm,height=4.5mm]{}{} & $\alpha = 0.8$ & $\alpha = 0.9$ & $\alpha = 1$\\
\hline\hline
$p = 2$ & $2.9641e^{+04}$ & $2.3167e^{+04}$ & $4.1000e^{+04}$\\
\hline
$p = 3$ & $2.4244e^{+04}$ & $2.0200e^{+04}$ & $3.1549e^{+04}$\\
\hline
$p = 4$ & $1.9380e^{+04}$ & $1.8588e^{+04}$ & $2.2178e^{+04}$\\
\hline
$p = 5$ & $1.9117e^{+04}$ & $1.7859e^{+04}$ & $2.0106e^{+04}$\\
\hline
\end{tabular}
\end{table}

Note that, depending on the tables~\ref{Tab3} and \ref{Tab4}, we observe that $\mathscr{T}$ decreases under the effect of vaccination for different values of $p$ and $\alpha$. 
The value of $\mathscr{T}$ is optimal for $p = 5$ and $\alpha = 0.9$.
 
\section{Conclusion}\label{S8}
In this manuscript, we have presented an innovative application of epidemiological models,
where the interactions between susceptible and infected individuals are captured by a system of FDEs using the $\ABC$ fractional derivative and the $p$ Laplacian operator. 
Our investigation aims to contribute to more realistic models of disease spread in certain scenarios. 
We established the existence of a unique bounded non-negative solution to our biological model. 
We also proved the Hyers-Ulam stability and optimality conditions. 
We validated our theoretical findings through numerical simulations, where the results obtained show that the rapid spread of the disease is associated with $\alpha$ taking natural values and $p$ taking small values. 
Moreover, the slow spread of the disease results from the fractional and large values of both $\alpha$ and $p$. 
In conclusion, the spread of the disease has been successfully controlled.


\section*{Declarations}



\subsection*{CRediT author statement} 

\textit{A. Zinihi:} Conceptualization, 
Methodology, 
Software, 
Formal analysis, 
Investigation, 
Writing -- Original Draft, 
Writing -- Review \& Editing, 
Visualization. 

\textit{M. R. Sidi Ammi:} Conceptualization, 
Methodology, 
Validation, 
Formal analysis, 
Investigation,  
Writing -- Original Draft, 
Writing -- Review \& Editing,  
Supervision.

\textit{M. Ehrhardt:} Validation, 
Formal analysis, 
Investigation, 
Writing -- Original Draft, 
Writing -- Review \& Editing, 
Project administration.





\subsection*{Data availability} 

All information analyzed or generated, which would support the results of this work are available in this article.
No data was used for the research described in the article.

\subsection*{Conflict of interest} 
The authors declare that there are no problems or conflicts 
of interest between them that may affect the study in this paper.


\bibliographystyle{acm}
\bibliography{paper}


\end{document}